\documentclass[notitlepage,jmp,10pt,superscriptaddress,aps,nofootinbib]{revtex4-1}

\usepackage{amsmath}
\usepackage{amssymb}
\usepackage{graphicx}
\usepackage{hyperref}
\usepackage{color}
\usepackage{mathrsfs}
\usepackage{extarrows}

\usepackage[T1]{fontenc}
\usepackage[utf8]{inputenc}

\usepackage{amsthm}
\usepackage{bbm}
\usepackage{mathtools}


\newtheorem{thm}{Theorem}[section]

\newtheorem{lem}[thm]{Lemma}
\newtheorem{prop}[thm]{Proposition}
\newtheorem{cor}[thm]{Corollary}
\newtheorem{remark}[thm]{Remark}



\def\tr{\operatorname{tr}}
\def\idty{\mathbbm1} 

\def\Cx{{\mathbb C}}
\def\Ir{{\mathbb Z}}\def\Nl{{\mathbb N}}
\def\Rt{{\mathbb Q}}


\def\braket#1#2{\langle #1,#2\rangle}

\def\brAAket#1#2#3{\langle#1\vert#2\vert#3\rangle}

\def\ket #1{\vert#1\rangle}

\def\tr{\mathop{\rm tr}\nolimits}
\def\abs#1{\vert#1\vert}



\def\BB{{\mathcal B}}\def\HH{{\mathcal H}}

\def\BH{\BB(\HH)}


\def\RA{\mathcal A_\MF}
\def\vNA{\mathscr{W}_\MF}

\def\embf{\emph} 

\def\MF{\Phi} 
\def\nr{p} 
\def\dr{q} 
\def\Wm{W_{\MF}}  
\def\wm{w_{\MF}}    
\def\Wmt{\widetilde{W}_{\MF}}  
\def\Wz{W_{0}}  

\DeclareMathOperator\dist{dist}

\def\idty{{\leavevmode\rm 1\mkern -5.4mu I}} 

\def\Cx{{\mathbb C}}
\def\Ir{{\mathbb Z}}\def\Nl{{\mathbb N}}
\def\Rt{{\mathbb Q}}\def\torus{{\mathbb T}}
\def\mm{\mathbb M}

\def\ket #1{\vert #1\rangle}
\def\braket #1#2{\langle #1 \vert #2\rangle}

\def\braketop #1#2#3{\langle #1 \vert #2\vert #3\rangle}
\def\abs#1{\vert#1\vert}

\def\inv{{-1}}


\usepackage{tikz}
\usetikzlibrary{shapes,arrows,positioning,calc}

\bibliographystyle{abbrvArXiv}

\begin{document}

\title{Singular continuous Cantor spectrum for magnetic quantum walks}

\author{C. Cedzich}
\affiliation{Laboratoire de Recherche en Informatique (LRI), Universit\'{e} Paris Sud, CNRS, Centrale Sup\'{e}lec, \\B\^{a}t 650, Rue Noetzlin, 91190 Gif-sur-Yvette, France}
\author{J. Fillman}
\affiliation{Department of Mathematics, Texas State University, San Marcos, TX 78666, USA}
\author{T. Geib}
\affiliation{Institut f\"ur Theoretische Physik, Leibniz Universit\"at Hannover, Appelstr. 2, 30167 Hannover, Germany}
\author{A.~H. Werner}
\affiliation{{QMATH}, Department of Mathematical Sciences, University of Copenhagen, Universitetsparken 5, 2100 Copenhagen, Denmark,}
\affiliation{{NBIA}, Niels Bohr Institute, University of Copenhagen, Denmark}

\begin{abstract}
 In this note, we consider a physical system given by a two-dimensional quantum walk in an external magnetic field. In this setup, we show that both the topological structure as well as its type depend sensitively on the value of the magnetic flux $\Phi$: while for $\Phi/(2\pi)$ rational the spectrum is known to consist of bands, we show that for $\MF/(2\pi)$ irrational the spectrum is a zero-measure Cantor set and the spectral measures have no pure point part.
\end{abstract}

\maketitle

\section{Introduction}

Electrons moving in a two-dimensional lattice under the influence of a uniform magnetic field are among the most intensively studied physical systems of the last decades, see e.g. \cite{brown1964bloch,hof76,TKNN,Laughlin}. The Almost-Mathieu Operator (and more generally the Extended Harper's Model) is one example of such a model which inspired intense study in mathematics; see \cite{AJ2009Ann,AJM2017Inv,MJ2017ETDS,J1999Ann} and references therein. Despite the simplicity of the setup its dynamical properties are surprisingly rich, and the propagation behaviour as well as the spectrum depend crucially on properties of the magnetic field like the rationality of its flux.

A similarly drastic dependence of the spectral and dynamical properties on a system parameter has recently been observed in quantum walks subject to external electric fields \cite{ewalks,anders_loc}. Quantum walks are discrete-time analogues of the unitary dynamics of a single particle with internal degree of freedom on a lattice constrained to finite propagation speed per time step \cite{aharonov1993quantum,Ambainis2001,Grimmet,TRcoin}. Also, quantum walks have shown to be a useful model for quantum simulation encompassing a variety of single-particle and few-particle quantum effects such as ballistic transport, decoherence, Anderson localization and graphene-like dispersion relations as well as the formation of bound states and symmetry protected topological phases \cite{dynlocalain,Joye2011,Joye_Merkli,dynloc,F2017CMP,molecules,moleculestheothers,SpaceTimeCoinFlux,TRcoin,UsOnTop_short,UsOnTop_long,UsOnTI}. At the same time quantum walks and their time-continuous counterpart are relevant in a quantum algorithmical context, for example in search algorithms, to examine the distinctness of elements, in quantum information processing and applications to the graph isomorphism problem \cite{childs2003exponential, ambainis2003quantum,Lovett:2010ff, childs2009universal,berry2011two,portugal2018quantum}. From  a mathematical point of view, quantum walks on a one-dimensional lattice can also be seen as a particular class of CMV matrices giving a link between quantum dynamical systems and orthogonal polynomials on the unit circle, a connection which has proved to be fruitful in both directions \cite{BGVW,CGMV,CGMV2,GVWW,WeAreSchur,cedzich2015quantum}. One can use spectral methods to deduce bounds on spreading \cite{DFO,DFV}. On the other hand, estimates of a dynamical nature can be turned into estimates on spectral quantities, for example, quantitative regularity of the spectral measures \cite{DEFHV}.

Recently, the coupling of quantum walks to electromagnetic fields was implemented by extending the minimal coupling principle to the discrete setting \cite{UsOnELM}, which allows one to analyze the reaction of a given quantum walk dynamics to an electromagnetic field. As already mentioned, for one dimensional systems the dynamics depends crucially on the rationality of the electric field: For rational fields the spectrum is absolutely continuous and the propagation is eventually ballistic after an initial segment with revivals of any initial state similar to Bloch oscillations. If the electric field is irrational the characterization is more involved and depends on the approximability of the field in terms of continued fractions: for well approximable fields the spectrum is singular continuous and the propagation is hierarchical, i.e. there is an infinite sequence of better and better revivals alternating with farther and farther excursions \cite{ewalks}. On the other hand, for badly approximable fields the system displays Anderson localization which even turns out to be the typical case \cite{anders_loc}.

Going beyond one spatial dimension, for the Hamiltonian setting, it is possible to connect the spectrum of the two dimensional magnetic Laplacian with the spectrum of the almost Mathieu operator, allowing to analyze not only the spectral type, but also its topological structure.
Taking these questions as a starting point, we discuss in this paper in detail two-dimensional quantum walks which we place into homogeneous magnetic fields using the aforementioned discrete analogue of minimal coupling from \cite{UsOnELM}. Also in the magnetic case the spectral and dynamical properties depend crucially on the rationality of the field, but this dependence turns out to be slightly less involved than for the electric walks in \cite{ewalks,anders_loc}: for rational fields the system is effectively translation invariant which by general arguments implies absolutely continuous spectrum and ballistic transport. For irrational fields, the spectrum is purely singular continuous which by the RAGE theorem \cite{cyconschroeder} excludes any form of localization and thereby contrasts the typicality of Anderson localization for one-dimensional electric walks; see also \cite{FO2017JFA} for a proof of the RAGE theorem in the discrete-time (unitary) case.

The paper is organized as follows: in Section \ref{sec:thesystem} we define the system under consideration and state the main result in
Theorem \ref{thm:specWm}. The proof of this theorem is divided into Section \ref{sec:proof_cantor} and Section \ref{sec:proof_absence}. 
\section{The system} \label{sec:thesystem}
\subsection{The physical model}

Quantum walks describe the time-discrete evolution of a single particle with an internal degree of freedom on a
lattice under the additional assumption of a finite propagation speed. In this paper, we consider particles on the two-dimensional lattice $\Ir^2$ with two-dimensional internal degree of freedom $\Cx^2$. Basis vectors of the corresponding Hilbert space $\HH=\ell^2(\Ir^2)\otimes\Cx^2$ are denoted by $\ket{x,s}$ with $x\in\Ir^2$ and $s=\pm1$, and the coordinates of $\psi \in \HH$ with respect to this basis are denoted by $\psi(x,\pm) \equiv \braket{x,\pm1}\psi$. A single timestep is implemented by the unitary operator
\begin{equation}\label{eq:dirac_walk}
  \Wz=S_1C_1S_2C_2,
\end{equation}
where the \embf{coin operators} $C_j=\bigoplus_{x\in\Ir^2}C_j(x)$ locally rotate the internal degree of freedom as
\begin{equation*}
  C_j(x)=\begin{pmatrix} c^j_{11}(x)    &   c^j_{12}(x) \\  c^j_{21}(x) &   c^j_{22}(x) \end{pmatrix}.
\end{equation*}
If not otherwise noted, in this paper we assume the coins to be translation invariant which means they act the same everywhere. Concretely, we take $C_1=C_2=\idty\otimes C_H$ where $C_H$ is the two-dimensional Hadamard matrix
\begin{equation}\label{eq:walk}
  C_H=\frac1{\sqrt2}\begin{pmatrix}1&\hphantom-1\\1&-1\end{pmatrix}.
\end{equation}

In contrast to the coin operators, the \embf{state-dependent shifts} $S_\alpha,\:\alpha=1,2,$ relate neighbouring cells. They are defined by $S_\alpha=\sum_{s=\pm1}t_\alpha^s\otimes P_s$ where $t_\alpha\ket{x}=\ket{x+e_\alpha}$ denotes the lattice translation in direction $e_\alpha$ and $P_{\pm1}$ projects onto the internal basis state $\ket{\pm1}$.

The aim of this paper is to study the walk $\Wz$ in external magnetic fields. Since $\Wz$ is given as a sequence of coin and shift operators, such fields are introduced by an approach similar to the minimal coupling scheme in continuous time \cite{UsOnELM}. This amounts to replacing the lattice translations $t_\alpha$ in the definition of the $S_\alpha$ by the unitary \embf{magnetic translations}
\begin{equation}\label{eq:magshift}
  T_\alpha\ket{x}=t_\alpha U_{\alpha}(x)\ket{x}=U_{\alpha}(x)\ket{x+e_\alpha}.
\end{equation}
Here, the $U_\alpha(x)$ are position- and direction-dependent phases which are determined only up to a discrete gradient $G(x+e_\alpha) G(x)^\inv$ reflecting the freedom of local $U(1)$-gauge transformations $G:\ket x\mapsto G(x)\ket x$.  In contrast to the lattice translations $t_\alpha$, the magnetic translations $T_\alpha$ do not commute anymore. Instead, they commute up to a plaquette phase, i.e.
\begin{equation}\label{eq:field}
  T_\alpha^*T_\beta^*T_\alpha T_\beta\ket x=U_\alpha(x)^*U_\beta(x+e_\alpha)^*U_\alpha(x+e_\beta)U_\beta(x)\ket x=e^{-iF_{\alpha\beta}(x)}\ket x.
\end{equation}
This commutator is invariant under local gauge transformations and we naturally call $F_{\alpha\beta}=-F_{\beta\alpha}\in[0,2\pi[$ a \embf{discrete magnetic field} \cite{UsOnELM}. Throughout this manuscript, we assume this magnetic field to be \embf{homogeneous}, i.e. independent of $x$, and denote it by $F_{12}=:\MF$. Note that since there are no electric fields present, a discrete version of the homogeneous Maxwell equations implies that $\MF$ is not only homogeneous but also static, i.e. independent of time \cite{UsOnELM}. If not otherwise stated, in this manuscript we fix the gauge to the \embf{symmetric gauge}
\begin{equation} \label{eq:symmetricGauge}
  U_1(x)=e^{-i x_2\MF/2},\quad U_2(x)=e^{i x_1\MF/2}.
\end{equation}

After applying this discrete minimal coupling procedure to $\Wz$ a single timestep is implemented by the \embf{magnetic walk}
\begin{equation}\label{eq:dirac_mag_walk}
  \Wm=\begin{pmatrix} T_1 & \\ & T_1^* \end{pmatrix} C_H \begin{pmatrix} T_2 & \\ & T_2^* \end{pmatrix} C_H.
\end{equation}

\begin{remark}\label{rem:sym_gauge_coins}
  The choice of gauge in \eqref{eq:symmetricGauge} ensures that the phase $U_\alpha(x)$ commutes with $t_\alpha$ in \eqref{eq:magshift} which allows us to interpret the magnetic walk $\Wm$ as a walk of the form \eqref{eq:dirac_walk} with quasi-periodic coins
\begin{equation} \label{eq:symGaugeQPcoins}
C_j
= \bigoplus_{x\in\Ir^2} e^{(-1)^j i\MF x_k \sigma_3 /2}\cdot C_H,\qquad j,k =1,2,\ j\neq k,
  \end{equation}
  where $\sigma_3$ is the third Pauli matrix.
Let us note for later use that the coins in \eqref{eq:symGaugeQPcoins} are never purely (off-)diagonal.
\end{remark}
For rational fields $\MF/(2\pi)=\nr/\dr$ the magnetic walk $\Wm$ is translation invariant after regrouping and can thus be diagonalized in momentum space \cite{UsOnRat_fields}. Its spectrum is absolutely continuous and consists of $2\dr$ bands which resemble a two-fold copy of the Hofstadter butterfly, see Figure \ref{Butterfliege}. For $\MF/(2\pi)$ irrational, arguments similar to those in Hofstadter's original paper \cite{hof76} immediately lead one to conjecture Cantor spectrum for $\Wm$. It is the goal of this paper to prove this conjecture and characterize the spectral type of $\Wm$ for $\MF/(2\pi)$ irrational.

\begin{figure}[t]
\begin{center}
  \includegraphics[width=.4\textwidth]{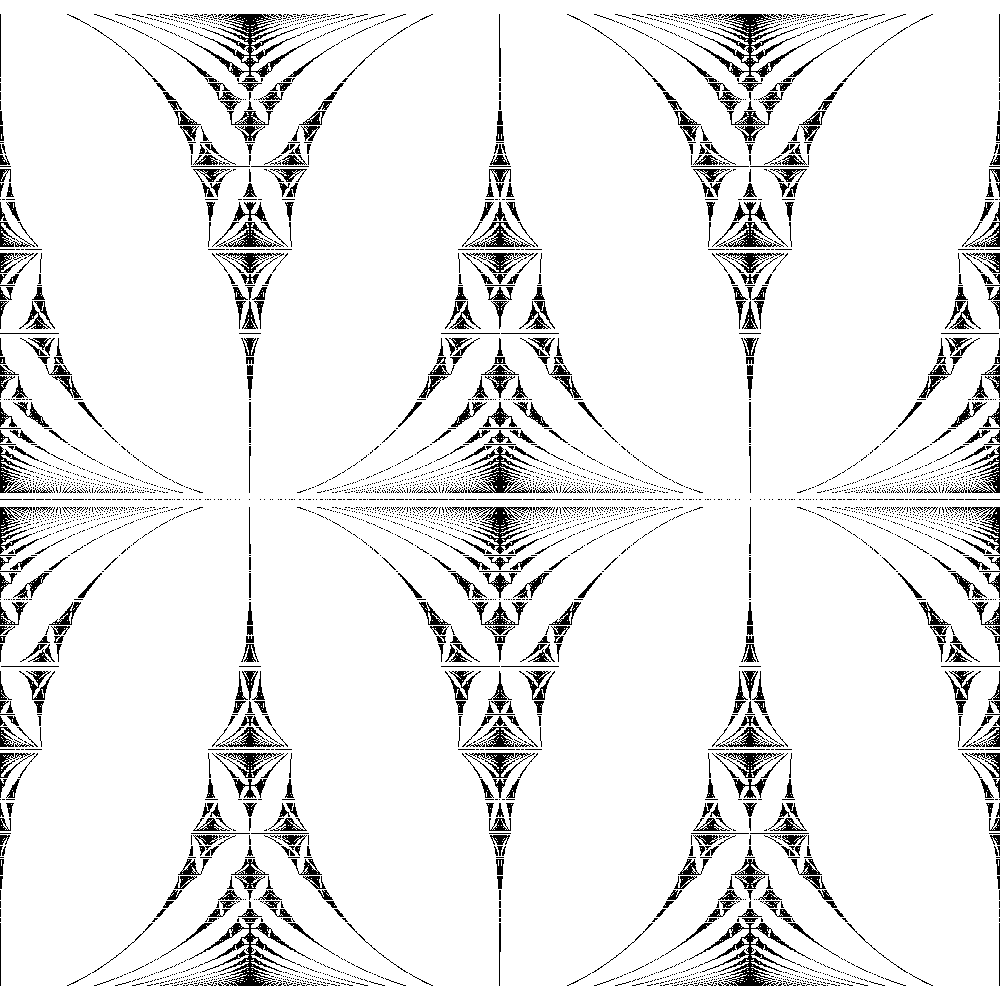}
  \caption{\label{Butterfliege}The spectra of the two-dimensional magnetic walk \eqref{eq:dirac_mag_walk} for rational fields  \cite{christalk} resembles a two-fold copy of the Hofstadter butterfly. The vertical axis corresponds to the field $\MF$ (plotted up to denominator 70) and the horizontal axis to the argument of the quasi-energy. Its fractal structure is evident. The symmetry of the spectrum with respect to $\MF\mapsto-\MF$ follows from general arguments about the rotation algebra \cite{Rieffel_Irrat_Rot}, and the symmetry with respect to $z\mapsto z^*$ is a consequence of $K\Wm K=W_{-\MF}$, where $K$ denotes complex conjugation, together with $\sigma(\Wm)=\sigma(W_{-\MF})$. Finally, the symmetry with respect to $z \mapsto -z$ is a consequence of $JW_\MF J = - W_\MF$, where $J$ is the unitary involution  $J\ket{x,s} = (-1)^{x_1+x_2}\ket{x,s}$.
  }
\end{center}
\end{figure}

\subsection{The rotation algebra point of view}
In order to get a handle on the spectral properties of $\Wm$ we sometimes adopt an algebraic approach. The magnetic translations $T_1$ and $T_2$ realize a particular instance of the so-called \emph{rotation algebra} $\RA$ which is defined abstractly as the universal $C^*$-algebra generated by two unitary elements $u,v$ such that
\begin{equation}\label{eq:uv}
  uv=e^{-i\MF}vu.
\end{equation}
This correspondence is established by the representation $\pi_{2}:\RA\to\BB(\ell^2(\Ir^2))$ given by the magnetic translations
\begin{equation}\label{eq:magrep}
  \pi_{2}(u)=T_1,\quad\pi_{2}(v)=T_2.
\end{equation}
Let us point out that the subscript on the representation refers to the dimension of the lattice in the codomain.

Denoting by $\mm_2(\RA)= \mm_2(\Cx) \otimes \RA$ the ring of $\RA$-valued $2\times2$-matrices, this representation of $\RA$ can be lifted to a map $\Pi_2:\mm_2(\RA)\to\mm_2(\BH)$ which acts on $a=\sum_{i,j=1,2}e_{ij}\otimes a_{ij}\in\mm_2(\RA)$ as
\begin{equation}\label{eq:Pi2}
  \Pi_2(a)=\sum_{i,j=1,2}e_{ij}\otimes \pi_2(a_{ij}),
\end{equation}
where $e_{ij}$ denotes the matrix whose $(i,j)$ entry is 1 and all others are $0$. Clearly, $\Pi_2$ is a $*$-representation because $\pi_2$ is. Then, we can interpret the magnetic walk $\Wm$ as the image under $\Pi_2$ of the unitary element $\wm\in\mm_2(\RA)$ given by
\begin{equation}\label{eq:abstract_walk}
  \wm=\begin{pmatrix}
    u& \\ & u^*
  \end{pmatrix}
  C_H\begin{pmatrix}
    v& \\ & v^*
  \end{pmatrix}C_H
  =\frac12\begin{pmatrix}
    u(v+v^*)    &   u(v-v^*)    \\
    u^*(v-v^*)    &   u^*(v+v^*)
  \end{pmatrix}.
\end{equation}

The abstract point of view in terms of the rotation algebra proves helpful in establishing the main result of this paper. Indeed, the first statement in Theorem \ref{thm:specWm} below is a purely topological result which is independent of the particular representation of the rotation algebra.

\begin{remark}
  The algebraic point of view on homogeneous translation systems extends to magnetic walks on higher dimensional lattices. However, it is not helpful in the presence of electric fields: following the construction in \cite{UsOnELM}, electric fields imply that the spatial translations $T_\alpha$ in \eqref{eq:magshift} commute merely up to phases with the time translation $T_0$ in the equation of motion. However, since $T_0$ relates different ``time slices'' of $\HH$ it cannot be represented as a Hilbert space operator.
\end{remark}

\subsection{The main result}

In this note we prove that the spectrum of $\Wm$ is a Cantor set for $\MF/(2\pi)$ irrational and, moreover, show that it cannot have a pure point part. To formulate this precisely:
\begin{thm}\label{thm:specWm}
  Let $\Wm$ be the magnetic walk defined in \eqref{eq:dirac_mag_walk}. Then:
  \begin{enumerate}
    \item   For $\MF/(2\pi)\notin\Rt$ the spectrum of $\Wm$ is a zero-measure Cantor set.
    \item   For any magnetic field the pure point spectrum of $\Wm$ is empty.
  \end{enumerate}
\end{thm}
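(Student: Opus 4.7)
The proof splits along the two statements of Theorem~\ref{thm:specWm}, and I would attack both via the rotation-algebra framework of Section~\ref{sec:thesystem} together with a rational-approximation strategy in the spirit of Hofstadter and of the almost-Mathieu analysis. Throughout, write $\alpha := \Phi/(2\pi)$.

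\textbf{Part (1).} The key step is Hausdorff continuity of $\Phi \mapsto \sigma(\wm)$ at irrational $\alpha$, which I would assemble from three ingredients: (i)~the rotation algebras $\{\mathcal A_\Phi\}$ form a continuous field of $C^*$-algebras in which $\wm$ is a norm-continuous section; (ii)~for irrational $\alpha$ the algebra $\mathcal A_\Phi$ is simple (Rieffel), so the representation $\Pi_2$ of \eqref{eq:Pi2} is faithful and hence $\sigma(\Wm) = \sigma_{\mm_2(\mathcal A_\Phi)}(\wm)$; and (iii)~at a point of simplicity in a continuous field, upper and lower semicontinuity of the spectrum combine to give full Hausdorff continuity. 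Choosing rational convergents $p_n/q_n \to \alpha$, the walk $W_{2\pi p_n/q_n}$ is translation invariant after blocking, so by Bloch--Floquet its spectrum is a union of at most $2q_n$ bands. Hausdorff continuity then forces $\sigma(\Wm)$ to lie within an arbitrarily small neighbourhood of a set fragmented into $2q_n \to \infty$ intervals, producing a nowhere-dense (Cantor) set. The zero-measure claim, which I expect to be the main technical hurdle of this part, requires a quantitative bandwidth estimate $|\sigma(W_{2\pi p_n/q_n})| \to 0$; together with lower semicontinuity of Lebesgue measure under Hausdorff convergence this yields $|\sigma(\Wm)| = 0$. The bandwidth estimate itself I would target via a Thouless-type formula relating total bandwidth to the Lyapunov exponent of the transfer cocycle, or through a direct structural estimate exploiting the unitarity of the Hadamard coin.

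\textbf{Part (2).} For rational $\alpha$ the spectrum is absolutely continuous by Bloch--Floquet, so there is no pure-point part. For irrational $\alpha$ I would begin with a fiber decomposition: after a partial Fourier transform in $x_2$, the magnetic translation $T_1$ acts as the correlated shift $(x_1, k_2) \mapsto (x_1 - 1, k_2 + \Phi/2)$, while $T_2$ becomes multiplication by $e^{i(x_1 \Phi/2 - k_2)}$. Writing $k_2 = \theta + n \Phi/2 \pmod{2\pi}$ for $\theta$ in a fundamental domain, the combination $m := x_1 + n$ is preserved by $T_1, T_1^*, T_2, T_2^*$ and by the coins, yielding a $\theta$-parametrized direct-integral decomposition of $\Wm$ into one-dimensional walks with quasiperiodic coins of frequency $\alpha$.

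To exclude eigenvalues in this family I would combine two complementary tools. For Liouvillean $\alpha$, a Gordon-type lemma for quasiperiodic CMV/walk matrices (adapted from the Schr\"odinger setting) directly rules out $\ell^2$ eigenvectors by comparing the walk with its periodic approximants on long intervals. For Diophantine $\alpha$, where Gordon fails, I would exploit the $\mathrm{II}_1$ factor structure: for irrational $\alpha$ the von Neumann algebra generated by $\Wm$ sits inside the hyperfinite $\mathrm{II}_1$ factor $\Pi_2(\mathcal A_\Phi)''$, so any point eigenvalue of $\Wm$ would correspond to an atom of the integrated density of states, which should be excluded by a covariance/ergodicity argument combined with the zero-measure spectrum from Part~(1). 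The principal obstacle is dispatching the Diophantine regime cleanly, and I expect most of the technical effort in Part~(2) to go there.
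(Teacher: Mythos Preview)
Your proposal diverges from the paper's arguments in both parts, and each part contains a genuine gap.

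\textbf{Part (1).} The paper does not carry out any bandwidth or gap analysis. Instead it observes that the abstract element $\wm\in\mm_2(\RA)$ of \eqref{eq:abstract_walk} can be represented not only as the two-dimensional walk $\Wm=\Pi_2(\wm)$ but also, via the one-dimensional representation $\pi_1(u)\ket{x}=\ket{x+1}$, $\pi_1(v)\ket{x}=e^{i(x\MF+\theta)}\ket{x}$ on $\ell^2(\Ir)$, as a one-dimensional walk unitarily equivalent to the unitary almost-Mathieu operator $\Wmt$ of \cite{FOZ17}. Since $\RA$ (hence $\mm_2(\RA)$) is simple for irrational $\alpha$, every representation is faithful and spectrum-preserving, so $\sigma(\Wm)=\sigma(\Wmt)$, and the zero-measure Cantor statement is then quoted directly from \cite{FOZ17}. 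Your continuous-field route is closer in spirit to how \cite{FOZ17} itself proceeds, but as written it has two errors. First, Hausdorff closeness to a union of $2q_n$ bands does \emph{not} by itself force nowhere-denseness: the interval $[0,1]$ is arbitrarily Hausdorff-close to the finite sets $\{k/n:0\le k\le n\}$. One needs in addition either gap-labelling to show a dense set of gaps is open, or the zero-measure statement first. Second, Lebesgue measure is \emph{not} lower semicontinuous under Hausdorff convergence (same counterexample), so ``$|\sigma(W_{2\pi p_n/q_n})|\to0$ plus Hausdorff convergence $\Rightarrow|\sigma(\Wm)|=0$'' fails as stated; one needs a quantitative H\"older estimate on $\MF\mapsto\sigma(\Wm)$ together with control of (number of bands)$\times$(Hausdorff rate), neither of which you have supplied.

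\textbf{Part (2).} The paper's argument does not split into Liouvillean/Diophantine cases. It shows that the spectral distribution function $N(\theta)=\tau_2(\chi_{I(\theta)}(\Wm))$ coincides with the integrated density of states $k(\theta)$ (defined via unitary finite-volume truncations obtained by a local decoupling of the walk along the boundary of $\Lambda_L$), and then proves $k$ continuous by a Delyon--Souillard argument: because the Hadamard coin is never purely diagonal, any solution of $W\psi=z\psi$ inside $\Lambda_L$ is determined by its values on a boundary strip of size $O(L)$, so eigenvalue multiplicities on $\Lambda_L$ are $O(L)=o(|\Lambda_L|)$. Continuity of $N$ then rules out point spectrum. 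Your Diophantine branch, by contrast, contains a real error: zero Lebesgue measure of the spectrum does \emph{not} exclude atoms of the density of states (any purely atomic measure lives on a Lebesgue-null set), so ``covariance/ergodicity combined with the zero-measure spectrum from Part~(1)'' cannot rule out eigenvalues. The missing mechanism is precisely a multiplicity bound of Delyon--Souillard type, which works independently of the arithmetic of $\alpha$ and renders the Gordon/Liouvillean half of your dichotomy unnecessary as well.
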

\noindent The zero-measure statement in the theorem refers to the arc-length measure on the circle. When we say that a subset $K$ of the circle is a Cantor set, we mean that $K$ is a closed, perfect, nowhere dense subset of the circle in the standard topology thereupon. Together, these results imply the following:
\begin{cor}
  For $\MF/(2\pi)$ irrational, the spectrum of $\Wm$ is purely singular continuous.
\end{cor}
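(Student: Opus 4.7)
The corollary is essentially a direct consequence of Theorem \ref{thm:specWm} combined with the Lebesgue decomposition of spectral measures, so the plan is short.

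The plan is to invoke the decomposition $\HH = \HH_{\rm pp} \oplus \HH_{\rm ac} \oplus \HH_{\rm sc}$ into the pure point, absolutely continuous, and singular continuous spectral subspaces associated to the unitary $\Wm$ (obtained by applying the Lebesgue decomposition to the spectral measures $\mu_\psi$ on the unit circle). This gives the corresponding decomposition of the spectrum $\sigma(\Wm) = \sigma_{\rm pp}(\Wm) \cup \sigma_{\rm ac}(\Wm) \cup \sigma_{\rm sc}(\Wm)$, and proving pure singular continuity amounts to ruling out the first two pieces.

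First I would eliminate the absolutely continuous part: part (1) of Theorem \ref{thm:specWm} gives that $\sigma(\Wm)$ has zero arc-length measure. Since every spectral measure $\mu_\psi$ is supported on $\sigma(\Wm)$, each $\mu_\psi$ is automatically singular with respect to the arc-length measure. Hence $\HH_{\rm ac} = \{0\}$ and $\sigma_{\rm ac}(\Wm) = \emptyset$. Next I would eliminate the pure point part directly: part (2) of Theorem \ref{thm:specWm} asserts that $\sigma_{\rm pp}(\Wm) = \emptyset$ for every value of the magnetic field, so in particular for $\MF/(2\pi)$ irrational, whence $\HH_{\rm pp} = \{0\}$.

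Combining these two observations leaves $\HH = \HH_{\rm sc}$ and therefore $\sigma(\Wm) = \sigma_{\rm sc}(\Wm)$, which is the desired purely singular continuous conclusion. There is no real obstacle here: the entire analytic and algebraic content sits in Theorem \ref{thm:specWm} itself (proved in Sections \ref{sec:proof_cantor} and \ref{sec:proof_absence}); the corollary only packages those two facts through the standard Lebesgue decomposition of spectral measures on the unit circle.
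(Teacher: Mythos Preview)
Your argument is correct and matches the paper's own proof essentially verbatim: both use part~(1) of Theorem~\ref{thm:specWm} to kill the absolutely continuous component (zero arc-length measure forces all spectral measures to be singular) and part~(2) to kill the pure point component, leaving only singular continuous spectrum.
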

This follows immediately: on the one hand, the spectrum has zero Lebesgue measure and therefore cannot have an absolutely continuous component, and on the other also the pure point component is empty by the second statement in the theorem above.

The spectral characterization of $\Wm$ in this theorem is similar to the one obtained in \cite{FOZ17} for a related model. Indeed, the proof of Cantor spectrum builds on the corresponding result in \cite{FOZ17} and was inspired by \cite{shubinDML}. Yet, the proof of singular continuous spectrum does not carry over straightforwardly: the model in \cite{FOZ17} is a quasi-periodic walk on the integers and the proof of the absence of pure point spectrum makes use of transfer matrix techniques. In two spatial dimensions such transfer matrix techniques are no longer available, and we have to find different techniques to exclude eigenvalues: taking an abstract algebraic point of view is inspired by ideas from the self-adjoint setting \cite{shubinDML}.

The methods we use below in the proof of Cantor spectrum are restricted to the magnetic walk with local coin given by the Hadamard matrix $C_H$. It is an interesting question whether the statement holds also if we allow for more general coins. While we suspect it does, a proof seems to require more heavy machinery and certainly goes beyond the scope of the current project. In contrast, the proof of the absence of point spectrum below is independent of the concrete choice of the coins: it applies to magnetic walks of the form \eqref{eq:dirac_walk} with arbitrary coins except for purely diagonal and purely off-diagonal ones. 
\section{Proof of Cantor spectrum}\label{sec:proof_cantor}

In this section we prove Cantor spectrum for the two-dimensional magnetic walk $\Wm$ with $\MF/(2\pi)$ irrational, i.e. the first statement of Theorem \ref{thm:specWm}. The idea behind the proof is to relate $\Wm$ to the so-called \emph{unitary critical almost Mathieu operator} which is defined as the quantum walk $\Wmt$ on $\ell^2(\Ir)\otimes\Cx^2$ given by
\begin{equation}
	\Wmt=SC_2(\MF),
\end{equation}
where $S$ denotes the conditional shift $S\ket{x,s}=\ket{x+s,s}$ for $x\in\Ir$, $s=\pm1$ and $C_2(\MF)$ is a position-dependent coin acting locally as
\begin{equation}\label{eq:shikanowalk}
	C_2(\MF,x)=e^{i(\MF x+\theta)\sigma_2}=\begin{pmatrix}
	\cos(\MF x+\theta) & -\sin(\MF x+\theta)\\
	\sin(\MF x+\theta) & \hphantom-\cos(\MF x+\theta)
	\end{pmatrix}
\end{equation}
for some fixed $\theta\in\torus$. Here, the subscript in the definition of $C_2(\MF)$ indicates the rotation axis in $\Cx^2$.
This walk was introduced in \cite{Linden2009,Shikano:2010id,Shikano:2011br} as a model for an inhomogeneous quantum walk, and its spectral properties have been studied extensively in \cite{FOZ17}. In particular \cite[Thm.\,1.1\,c]{FOZ17}:
\begin{thm}
  Let $\MF/(2\pi)$ be irrational. Then, the spectrum of $\Wmt$ is a zero measure Cantor set for all $\theta$.
\end{thm}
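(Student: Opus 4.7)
The plan is to follow the well-established strategy for critical almost-Mathieu-type operators: reduction to a quasi-periodic cocycle, rational approximation, and quantitative estimates on periodic approximants. First, I would realise $\Wmt$ as an extended CMV matrix. The conditional shift combined with the $SO(2)$-valued coin in \eqref{eq:shikanowalk} corresponds, after an appropriate unitary change of basis, to a CMV matrix whose Verblunsky coefficients are of the form $\alpha_x = \alpha_0(\MF x + \theta)$ for an explicit real-analytic $\alpha_0 \colon \torus \to (-1,1)$; in particular the sequence $(\alpha_x)$ is quasi-periodic with frequency $\MF/(2\pi)$, placing the problem inside the Simon--Szeg\H{o} framework for orthogonal polynomials on the unit circle.

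Second, I would pass to transfer matrices. The eigenvalue equation $\Wmt\psi = z\psi$ with $|z|=1$ is equivalent to a two-component recursion $\Psi(x+1) = T(\MF x + \theta, z)\Psi(x)$ with $T$ taking values in $SU(1,1)$, which defines a cocycle over the irrational rotation of $\torus$ by $\MF/(2\pi)$. The spectrum of $\Wmt$ is then the set of $z$ on the unit circle at which this cocycle fails to be uniformly hyperbolic. The crucial observation at this stage is that the $SO(2)$-form of the coin, with equal cosine and sine weights, places us at the self-dual point of the unitary Aubry duality, forcing the Lyapunov exponent $L(z)$ to vanish identically on the spectrum.

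Third, I would approximate $\MF/(2\pi)$ by rationals $p_n/q_n$ with $\gcd(p_n,q_n)=1$. For $\MF_n = 2\pi p_n/q_n$ the operator $\widetilde W_{\MF_n}$ is $q_n$-periodic, and by a unitary Floquet argument its spectrum is a union of at most $2q_n$ closed arcs on the unit circle. The critical quantitative claim is that the total arc-length of these bands is $O(1/q_n)$. This is the unitary analogue of the classical bandwidth estimate for the critical AMO and rests on a Chambers-type formula expressing the $q_n$-step monodromy as an explicit trigonometric polynomial in $z$, combined with vanishing Lyapunov exponent. Norm-continuity of $\Wmt$ in the parameter $\MF$ gives Hausdorff continuity of the spectrum, and upper semi-continuity of arc-length measure under Hausdorff convergence then yields that $\sigma(\Wmt)$ has zero arc-length measure.

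Finally, zero arc-length measure immediately implies empty interior and hence that $\sigma(\Wmt)$ is nowhere dense. Closedness is automatic, and perfectness follows from standard ergodic-theoretic arguments for quasi-periodic CMV operators: any isolated point in the spectrum would produce a finite-multiplicity eigenvalue at a deterministic location, contradicting the minimality of the underlying rotation $\theta\mapsto\theta+\MF$. The hardest step I expect is the $O(1/q_n)$ bandwidth bound in step three. Even in the self-adjoint setting this already requires delicate trace estimates for the monodromy matrix (of Choi--Elliott--Yui, Last, Avron--van Mouche--Simon type), and transferring the argument to the CMV setting requires carefully tracking where the eigenvalues of the $SU(1,1)$-monodromy sit relative to the unit circle, which is subtler than the analogous real-line analysis for Schr\"odinger operators. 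The self-duality at the Hadamard point should be the decisive ingredient that pins us exactly at the threshold where the total bandwidth vanishes in the limit $q_n \to \infty$.
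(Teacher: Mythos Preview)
The paper does not actually prove this theorem: it is quoted verbatim as \cite[Thm.\,1.1\,c]{FOZ17} and used as a black box. The paper's own contribution in Section~\ref{sec:proof_cantor} is the algebraic transfer of this one-dimensional result to the two-dimensional walk $\Wm$ via simplicity of the irrational rotation algebra, not a proof of the Cantor statement for $\Wmt$ itself. So there is nothing in the paper to compare your argument against step by step.

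That said, your outline is a faithful description of the strategy that underlies the cited result. The reduction to an extended CMV matrix with quasi-periodic Verblunsky coefficients, the $SU(1,1)$ Szeg\H{o} cocycle, self-duality at the Hadamard/rotation-coin point forcing the Lyapunov exponent to vanish on the spectrum, rational approximation with a quantitative bandwidth estimate, and the ergodicity argument for perfectness are exactly the ingredients used in \cite{FOZ17}. You have also correctly identified the genuinely hard step: the $O(1/q_n)$ total-bandwidth bound for the periodic approximants, which in the unitary setting requires adapting the Last--Avron--van~Mouche--Simon machinery to $SU(1,1)$ monodromies on the circle rather than $SL(2,\Rl)$ monodromies on the line. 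One small caveat: your formulation ``$\alpha_x=\alpha_0(\MF x+\theta)$ with $\alpha_0:\torus\to(-1,1)$'' glosses over the fact that the CMV dictionary for split-step walks typically produces Verblunsky coefficients that alternate in a two-periodic pattern (and may be complex), so the bookkeeping in step one is slightly more involved than a single sampling function; this does not affect the overall strategy but does enter the explicit Chambers-type computation you would need for the bandwidth bound.
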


Crucial for the proof of this theorem are the transfer matrix techniques for one-dimensional systems, which are not available on the two-dimensional lattice. However, we can establish a relation between the one-dimensional walk $\Wmt$ and the two-dimensional walk $\Wm$ by considering the representation $\pi_1$ of $\RA$ on $\ell^2(\Ir)$ given by
\begin{equation}\label{eq:choirep}
  \pi_1(u)\ket x=\ket{x+1},\qquad \pi_1(v)\ket x=e^{i (x\MF+\theta)}\ket x,
\end{equation}
where as above the subscript on the representation refers to the dimension of the lattice in the codomain. Similar to the definition of $\Pi_2$ in \eqref{eq:Pi2} this representation can be lifted to a representation $\Pi_1$ of $\mm_2(\RA)$. The image of $\wm$ in this lifted representation is given by
\begin{equation}
  \Pi_1(\wm)=SC_1(\MF),
\end{equation}
where $C_1(\MF)$ is a coin operator acting locally as $C_1(\MF,x)=\exp(i\MF x\sigma_1)$. In particular, $C_1(\MF)$ is unitarily equivalent to $C_2(\MF)$ by $\idty\otimes\exp(i\sigma_3\pi/4)$ which commutes with the shift operator $S$. Thus, up to unitary equivalence, $\Wm$ and $\Wmt$ are the images of the same element $\wm$ of $\mm_2(\RA)$ under different representations.

The first statement in Theorem \ref{thm:specWm} then follows from general observations about $\RA$ and $\mm_2(\RA)$, or more general $\mm_n(\RA)$, and their images under different representations. First, for $\MF/(2\pi)$ irrational $\RA$ is simple, i.e. its only closed, two-sided ideals are trivial \cite{Rieffel_Irrat_Rot}. This implies that $\mm_n(\RA)$ is also simple for all $n$ \cite{grillet2007abstract}.
The simplicity of a $C^*$-algebra $\mathcal A$ has profound consequences: it implies that all its representations are faithful, hence isometric and therefore preserve the spectrum \cite{BratelliRobinson}:
\begin{prop}
  Let  $\pi,\pi'$ be representations of a simple $C^*$-algebra $\mathcal A$. Then, for any $a\in\mathcal A$
  \begin{equation}
    \sigma(\pi(a))=\sigma(\pi'(a)).
  \end{equation}
\end{prop}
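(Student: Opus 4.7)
The plan is to show that every nonzero representation of a simple $C^*$-algebra is faithful, hence isometric, and that an isometric $*$-homomorphism between $C^*$-algebras preserves spectra. Once that is done, both $\sigma(\pi(a))$ and $\sigma(\pi'(a))$ will equal the abstract spectrum $\sigma_{\mathcal A}(a)$, and the claimed equality follows at once.

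The first step is to recall that for any $*$-representation $\pi:\mathcal A\to\BB(\HH)$ the kernel $\ker\pi$ is a norm-closed two-sided ideal of $\mathcal A$. Simplicity forces $\ker\pi\in\{0,\mathcal A\}$; discarding the zero representation (which has trivial image and is never intended when speaking of ``representations'' in this context), we conclude $\ker\pi=\{0\}$, so $\pi$ is injective. The second step is to invoke the classical fact that an injective $*$-homomorphism between $C^*$-algebras is automatically isometric. This is a consequence of the $C^*$-identity together with the fact that the norm of a self-adjoint element in a $C^*$-algebra equals its spectral radius, and that $*$-homomorphisms can only shrink spectra of self-adjoint elements; the argument is standard and reproduced in \cite{BratelliRobinson}. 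Applying the same reasoning to $\pi'$, we see that both $\pi$ and $\pi'$ are isometric $*$-isomorphisms of $\mathcal A$ onto their respective images, which are $C^*$-subalgebras of $\BB(\HH)$ and $\BB(\HH')$.

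The third and final step is spectral permanence: for a unital $C^*$-subalgebra $\BB\subset\mathcal C$ of a unital $C^*$-algebra $\mathcal C$, the spectrum of any $b\in\BB$ computed in $\BB$ agrees with its spectrum in $\mathcal C$. Applying this to $\pi(\mathcal A)\subset\BB(\HH)$ and to $\pi'(\mathcal A)\subset\BB(\HH')$, and using that $\pi,\pi'$ are $*$-isomorphisms onto their images, we obtain $\sigma(\pi(a))=\sigma_{\pi(\mathcal A)}(\pi(a))=\sigma_{\mathcal A}(a)=\sigma_{\pi'(\mathcal A)}(\pi'(a))=\sigma(\pi'(a))$. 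In our setting these representations will of course be unital, so no subtleties regarding units arise. The only nonroutine point is the isometry of injective $*$-homomorphisms, but this is a textbook result; no further work is required beyond citing it.
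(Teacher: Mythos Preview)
Your argument is correct and follows exactly the line the paper itself takes: simplicity forces faithfulness, faithfulness of a $*$-homomorphism implies isometry (citing \cite{BratelliRobinson}), and isometric $*$-isomorphisms preserve spectra via spectral permanence. The paper does not spell out the details you provide, but the route is identical.
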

\noindent Thus, we conclude:
\begin{cor}
  Let $\MF/(2\pi)$ be irrational. Then $\Wm$ and $\Wmt$ have the same spectrum as a subset of $\Cx$, i.e.
  \begin{equation}
	\sigma\bigl(\Wm\bigr)=\sigma\bigl(\Pi_2(\wm)\bigr)=\sigma\bigl(\Pi_1(\wm)\bigr)=\sigma\bigl(\Wmt\bigr).
  \end{equation}
\end{cor}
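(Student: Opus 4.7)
The corollary is essentially a direct application of the preceding proposition, and the plan is to chain four equalities, each of which is either by definition, by unitary equivalence, or by the spectrum-preserving property of faithful representations of a simple $C^*$-algebra.

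First I would handle the two outer equalities. The equality $\sigma(\Wm) = \sigma(\Pi_2(\wm))$ is just the definition: comparing \eqref{eq:dirac_mag_walk} with \eqref{eq:abstract_walk} and \eqref{eq:Pi2}, and using $\pi_2(u) = T_1$, $\pi_2(v) = T_2$, one checks that $\Wm$ is literally $\Pi_2(\wm)$. For the equality $\sigma(\Pi_1(\wm)) = \sigma(\Wmt)$, the text already observes that $\Pi_1(\wm) = S C_1(\MF)$ with $C_1(\MF,x) = \exp(i\MF x\,\sigma_1)$, and that $C_1(\MF)$ is unitarily equivalent to $C_2(\MF)$ via the global unitary $\idty \otimes \exp(i\sigma_3 \pi/4)$, which commutes with the conditional shift $S$. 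Hence $\Pi_1(\wm)$ and $\Wmt$ are unitarily equivalent and therefore isospectral.

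The middle equality $\sigma(\Pi_2(\wm)) = \sigma(\Pi_1(\wm))$ is where the algebraic structure enters, and this is what I would invoke the proposition for. The proposition requires a simple ambient $C^*$-algebra, and here the ambient algebra is $\mm_2(\RA) = \mm_2(\Cx)\otimes\RA$. For $\MF/(2\pi)$ irrational, Rieffel's theorem \cite{Rieffel_Irrat_Rot} gives that $\RA$ is simple, and it is a standard fact that simplicity is preserved under tensoring with $\mm_n(\Cx)$ (equivalently, the closed two-sided ideals of $\mm_n(\mathcal A)$ are in bijection with those of $\mathcal A$), so $\mm_2(\RA)$ is simple. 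The representations $\Pi_1, \Pi_2$ of $\mm_2(\RA)$ are nonzero, and any nonzero $*$-homomorphism out of a simple $C^*$-algebra has trivial kernel, hence is faithful and therefore isometric; consequently it preserves spectra. Applying this to $\wm \in \mm_2(\RA)$ gives $\sigma(\Pi_2(\wm)) = \sigma(\Pi_1(\wm))$.

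The proof is essentially bookkeeping, and I do not anticipate a real obstacle: every ingredient has been set up in the preceding subsections. The only point that deserves a sentence of care is the transfer of simplicity from $\RA$ to $\mm_2(\RA)$, which one can either cite from standard references on $C^*$-algebras or verify directly by noting that an ideal $J \subseteq \mm_2(\RA)$ induces the ideal of corner entries $\{a_{11} : a \in J\} \subseteq \RA$, which must therefore be $\{0\}$ or all of $\RA$.
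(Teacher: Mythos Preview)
Your proposal is correct and follows exactly the paper's own argument: the outer equalities are by definition and by the unitary equivalence $\Pi_1(\wm)\cong\Wmt$ already noted in the text, while the middle equality is the proposition applied to the simple $C^*$-algebra $\mm_2(\RA)$. The paper leaves the corollary without an explicit proof (it writes ``Thus, we conclude''), and your write-up simply makes those steps explicit.
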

\section{Absence of point spectrum}\label{sec:proof_absence}

The present section concludes the proof of Theorem~\ref{thm:specWm} by proving the second statement. Let us note that if $\MF/(2\pi)$ is rational, then the spectral type of $W_\MF$ is purely absolutely continuous and hence point spectrum is absent. Consequently, we focus on the case $\MF/(2\pi)\notin \Rt$. We will again use some ideas from the rotation algebra perspective. Choose the symmetric gauge $(U_1,U_2)$ as in Section~\ref{sec:thesystem} and let $\vNA = \vNA(T_1,T_2) \equiv W^*(T_1,T_2) \subseteq \BB(\ell^2(\Ir^2))$ denote the von Neumann algebra generated by the magnetic translations $T_1$ and $T_2$.

Let us denote the center of a von Neumann-algebra $\mathscr{A}$ by $Z(\mathscr{A})$, that is, $Z(\mathscr{A})=\{a\in \mathscr{A}:ab=ba\:\forall b\in \mathscr{A}\}$. Additionally, for $n \in \Nl$, we define $\mm_n(\mathscr{A})=\mm_n(\Cx)\otimes\mathscr{A}$ to be the matrix algebra of $n\times n$ matrices with entries in $\mathscr{A}$. We first note that the center of the matrix algebra over $\vNA$ is trivial:
\begin{lem}
  If $\MF/(2\pi)$ is irrational, the center of $\mm_n(\vNA)$ is trivial, that is
  \begin{equation}
  Z(\mm_n(\vNA)) =
  \big\{\idty_n \otimes a : a \in Z(\vNA)=\{\lambda \idty:\lambda\in\Cx\}\big\},
  \end{equation}
  where $\idty_n$ and $\idty$ denote the identity in $\mm_n(\Cx)$ and in $\vNA$, respectively.
\end{lem}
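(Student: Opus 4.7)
The plan is to split the statement into two independent parts: (a) $\vNA$ has trivial center $Z(\vNA) = \Cx\idty$, and (b) $Z(\mm_n(\vNA)) = \idty_n\otimes Z(\vNA)$. Part (b) is a routine matrix-unit calculation: if $A = \sum_{i,j} e_{ij}\otimes a_{ij}$ lies in $Z(\mm_n(\vNA))$, then commutation with each matrix unit $e_{kl}\otimes\idty$ forces $a_{ij} = 0$ for $i\neq j$ and makes the diagonal entries coincide, i.e.\ $A = \idty_n\otimes a$ for some $a\in\vNA$; further commutation with $\idty_n\otimes b$ then places $a$ in $Z(\vNA)$. The content is therefore in (a).

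For (a) I would identify $\pi_2$ with the tracial GNS representation of $\RA$. Since $\RA$ is simple for $\MF/(2\pi)\notin\Rt$, it carries a unique tracial state $\tau$, determined on the generators by $\tau(u^m v^n) = \delta_{m,0}\delta_{n,0}$. Setting $\xi := \ket{(0,0)}\in\ell^2(\Ir^2)$, a direct computation in the symmetric gauge gives
\begin{equation*}
  \pi_2(u^m v^n)\xi = T_1^m T_2^n \ket{(0,0)} = e^{-imn\MF/2}\ket{(m,n)},
\end{equation*}
so $\xi$ is cyclic for $\pi_2(\RA)$, and the associated vector state $\omega(a) := \braket{\xi}{\pi_2(a)\xi}$ agrees with $\tau$ on the generating polynomials, hence on all of $\RA$ by continuity. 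Thus $\pi_2$ is unitarily equivalent to the tracial GNS representation, $\tau$ extends to a faithful normal tracial state on $\vNA$ with trace vector $\xi$, and the family $\{T_1^m T_2^n\xi\}_{m,n\in\Ir}$ is, up to unit phases, the standard orthonormal basis of $\ell^2(\Ir^2)$. In particular, every $a\in\vNA$ is determined by its Fourier coefficients $c_{m,n}(a) := \tau\bigl((T_1^m T_2^n)^* a\bigr) = \braket{T_1^m T_2^n\xi}{a\xi}$.

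To exclude nontrivial central elements I would use invariance under the inner automorphisms $\alpha_k(b) := T_k b T_k^*$. A short computation using $T_1 T_2 = e^{-i\MF} T_2 T_1$ yields
\begin{equation*}
  \alpha_1(T_1^m T_2^n) = e^{-in\MF} T_1^m T_2^n, \qquad \alpha_2(T_1^m T_2^n) = e^{im\MF} T_1^m T_2^n,
\end{equation*}
so $\alpha_1,\alpha_2$ act diagonally on the basis above. For any $a\in Z(\vNA)$ one has $\alpha_k(a) = a$; comparing Fourier coefficients gives $(1-e^{-in\MF})c_{m,n}(a) = 0 = (1-e^{im\MF})c_{m,n}(a)$, and irrationality of $\MF/(2\pi)$ then forces $c_{m,n}(a) = 0$ whenever $(m,n)\neq(0,0)$. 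Hence $a\xi = c_{0,0}(a)\xi$, and by faithfulness of $\tau$ this gives $a = c_{0,0}(a)\idty$. The main subtlety is making the Fourier expansion rigorous in the operator-algebra setting, which the tracial GNS picture handles cleanly; the whole argument hinges on the classical uniqueness of the trace on $\RA$ together with the cyclic and separating properties of $\xi$.
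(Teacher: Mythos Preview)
Your argument is correct. The paper takes a more minimalist route: for part~(b) it cites a standard algebra fact (Grillet, Chapter~IX, Corollary~4.4) that $Z(\mm_n(\mathscr A))=\idty_n\otimes Z(\mathscr A)$ for any unital algebra $\mathscr A$, and for part~(a) it simply invokes \cite[Proposition~2.2]{shubinDML} for the triviality of $Z(\vNA)$ when $\MF/(2\pi)$ is irrational. Your matrix-unit computation in~(b) is exactly the content of the cited result. Your treatment of~(a) --- identifying $\pi_2$ with the tracial GNS representation, using that the GNS vector $\xi=\ket{(0,0)}$ is cyclic and separating, and then annihilating the nonzero Fourier coefficients $c_{m,n}(a)$ of a central element via the inner automorphisms $\alpha_k=T_k(\cdot)T_k^*$ --- is a clean self-contained proof of precisely what Shubin's proposition asserts. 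The paper buys brevity by deferring to the literature; your version makes the role of irrationality (namely $e^{im\MF},e^{-in\MF}\neq1$ for $(m,n)\neq(0,0)$) and the separating property of the trace vector fully explicit, which is pedagogically valuable and would stand on its own without the external reference.
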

\begin{proof}
It is straightforward to show that the center of a matrix ring over a unital algebra $\mathscr{A}$ consists of scalar multiples of the identity matrix with scalars belonging to the center of $\mathscr{A}$. A standard reference is \cite[Chapter IX, Corollary 4.4]{grillet2007abstract}. The statement thus follows from the fact that the center of $\vNA$ is trivial if $\MF/(2\pi)$ is irrational \cite[Proposition~2.2]{shubinDML}.
\end{proof}

Naturally, the foregoing lemma is false when $\Phi$ is a rational multiple of $2\pi$. For example, if $\Phi/(2\pi) = p/q$ with $p,q \in \Ir$ and $q \neq 0$, then one can verify that $T_1^{2q}, T_2^{2q} \in Z(\vNA)$.

On $\mm_n(\vNA)$ define the trace
\begin{equation} \label{eq:taunDef}
  \tau_n=\frac{1}{n}\tr\otimes\tau,
\end{equation}
where $\tau(a):=\brAAket0a0$ is a trace on $\vNA$ which is unique if $\MF/(2\pi)$ is irrational \cite{shubinDML}, and $\tr$ is the matrix trace. That is, $\tau_n(A) = n^\inv \sum_{j=1}^n \tau(a_{jj})$.  One can check that $\tau_n$ inherits all properties of $\tau$, so $\mm_n(\vNA)$ is a ${\rm II}_1$-factor, like $\vNA$.

An important observation is that part of \cite[Proposition 2.1 (ii)]{shubinDML} carries over to the matrix valued case.  Concretely, given $A \in \mm_n(\vNA)$, we may write $A = \sum_{i,j=1}^n e_{ij} \otimes a_{ij}$, where $a_{ij} \in \vNA$. For $k,\ell \in \Ir^2$, define the matrix $A_{k\ell} \in \mm_n(\Cx)$ by
\begin{equation}
A_{k\ell}
=
\sum_{i,j=1}^n \langle k|a_{ij}|\ell\rangle \:e_{ij} .
\end{equation}
\begin{lem}\label{lem:diag}
  Let $A\in\mm_n(\vNA)$. Then, for all $k,\ell,m\in \Ir^2$, we have
  \begin{equation} \label{eq:lemDiag}
A_{k\ell} e^{-i\frac{\MF}{2} k \wedge \ell}
=
A_{(k+m),(\ell+m)} e^{-i\frac{\MF}{2}(k+m)\wedge(\ell+m)}
  \end{equation}
  where $k\wedge \ell = k_1 \ell_2 - k_2 \ell_1$ denotes the skew product in $\Ir^2$. In particular, $A_{\ell\ell} = A_{mm}$ for all $\ell,m \in \Ir^2$.
\end{lem}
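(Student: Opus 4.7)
The plan is to exploit the symmetric-gauge structure to identify a natural commutant of $\vNA$ inside $\BB(\ell^2(\Ir^2))$. Define the \emph{dual} magnetic translations $\widetilde T_\alpha$ by flipping the sign of the gauge phases, namely
\begin{equation*}
\widetilde T_1\ket x=e^{+ix_2\MF/2}\ket{x+e_1},\qquad \widetilde T_2\ket x=e^{-ix_1\MF/2}\ket{x+e_2}.
\end{equation*}
A direct check shows $T_\alpha\widetilde T_\beta=\widetilde T_\beta T_\alpha$ for all $\alpha,\beta\in\{1,2\}$ (the two gauge phases always combine in opposite ways so that the plaquette phase cancels), hence $\widetilde T_1,\widetilde T_2\in\vNA'$. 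Consequently, for every $m\in\Ir^2$ the product $\widetilde T^m:=\widetilde T_1^{m_1}\widetilde T_2^{m_2}$ lies in $\vNA'$ and commutes with every element of $\vNA$.

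Next, because $\widetilde U_1$ depends only on $x_2$ and $\widetilde U_2$ only on $x_1$, the accumulation of phases under $\widetilde T^m$ is transparent, and a short computation yields
\begin{equation*}
\widetilde T^m\ket x=e^{i\MF\gamma(m)}\,e^{i\MF(m\wedge x)/2}\ket{x+m},
\end{equation*}
where $\gamma(m)$ is an $x$-independent constant. For $a\in\vNA$, taking matrix elements of $\widetilde T^m a=a\widetilde T^m$ in the position basis makes $\gamma(m)$ drop out on both sides and gives
\begin{equation*}
e^{i\MF(m\wedge k)/2}\langle k-m|a|\ell\rangle=e^{i\MF(m\wedge\ell)/2}\langle k|a|\ell+m\rangle.
\end{equation*}
Shifting $k\mapsto k+m$ and invoking the bilinear identity $(k+m)\wedge(\ell+m)-k\wedge\ell=(k-\ell)\wedge m$, this rearranges into the scalar ($n=1$) version of \eqref{eq:lemDiag}. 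The matrix-valued statement then follows by linearity: decomposing $A=\sum_{i,j=1}^n e_{ij}\otimes a_{ij}\in\mm_n(\vNA)$ and applying the scalar identity to each $a_{ij}$ reassembles to \eqref{eq:lemDiag}.

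The ``in particular'' claim is the specialization $k=\ell$: both $k\wedge\ell$ and $(k+m)\wedge(\ell+m)$ vanish, so the phase factors disappear and \eqref{eq:lemDiag} reduces to $A_{\ell\ell}=A_{\ell+m,\ell+m}$; since $m$ is arbitrary, $A_{\ell\ell}$ does not depend on $\ell$.

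The only nontrivial work is the bookkeeping: identifying the correct commutant elements (the dual magnetic translations of the symmetric gauge) and tracking the phases so that the result emerges in the manifestly gauge-covariant form \eqref{eq:lemDiag}. Beyond that, the proof rests on the single commutation relation $\widetilde T^m a=a\widetilde T^m$ together with antisymmetry of the wedge product.
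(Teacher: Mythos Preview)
Your argument is correct and is essentially a self-contained reconstruction of the proof behind Shubin's Proposition~2.1, which is all the paper invokes. The paper does not give an independent argument; it simply cites that result. Your route---introducing the dual magnetic translations $\widetilde T_\alpha$ in the commutant $\vNA'$, computing $\widetilde T^m\ket x$, and reading off the covariance of matrix elements from $\widetilde T^m a=a\widetilde T^m$---is exactly the standard mechanism underlying Shubin's statement, so the two approaches coincide in substance.

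One bookkeeping remark: carrying your computation through to the end yields
\[
\langle k|a|\ell\rangle\,e^{+i\frac{\MF}{2}k\wedge\ell}=\langle k+m|a|\ell+m\rangle\,e^{+i\frac{\MF}{2}(k+m)\wedge(\ell+m)},
\]
i.e.\ the identity with the opposite sign in the exponent compared to \eqref{eq:lemDiag} as printed. (A quick check with $a=T_1$ confirms that $\langle k|T_1|\ell\rangle\,e^{+i\MF(k\wedge\ell)/2}$ is the translation-invariant combination in the symmetric gauge.) This is a harmless convention/sign issue and, crucially, it does not affect the diagonal case $k=\ell$, which is the only consequence used later in the paper. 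Your derivation of $A_{\ell\ell}=A_{mm}$ is therefore fully valid.
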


\begin{proof}
This follows immediately from \cite[Proposition~2.1]{shubinDML} with $\alpha = \Phi/(2\pi)$ (compare \cite[Equation~(2.3)]{shubinDML} with \eqref{eq:uv}).
\end{proof}

\begin{remark}
  The reader should notice that Lemma~\ref{lem:diag} is a statement about a representation of $\RA$, not about $\RA$ itself. Thus, Lemma~\ref{lem:diag} yields statements that (ostensibly, at least) depend on the choice of gauge. However, we will only use the diagonal case $k=\ell$, which simply says $A_{\ell\ell}=A_{mm}$ for all $m,\ell\in \Ir^2$ and does \emph{not} depend on the gauge.
\end{remark}

We can now define the spectral distribution function (SDF) for the magnetic walk $\Wm$, which we can (and do) view as an element of $\mm_2(\vNA)$. Define a measure $dN$ on $\torus$  by
\begin{equation}
\int_\torus g \, dN
=
\tau_2(g(\Wm))
= \frac{1}{2} \sum_{s=\pm1} \braketop{0,s}{g(\Wm)}{0,s},
\end{equation}
for bounded Borel functions $g:\torus\to\Cx$. The second equality follows from the explicit form of $\tau_n$ in \eqref{eq:taunDef}.
The SDF of $\Wm$ is the accumulation function of $dN$, that is, $N(\theta) = \int_\torus \chi_{I(\theta)} \, dN$, where $I(\theta)$ denotes the arc $I(\theta) \equiv \{ \exp(i\theta') : 0 \le \theta' \le \theta\}$ and $\chi_{I(\theta)}$ is the characteristic function of $I(\theta)$. In order to talk about the point at $\theta=0$ in a coherent fashion, we adopt the convention $N(\theta + 2\pi) = N(\theta)+1$. Equivalently, we have
\begin{equation}
	N(\theta)=\tau_2\left (E_\theta\right),\quad \theta\in[0,2\pi[,
\end{equation}
where $E_\theta = \chi_{I(\theta)}(\Wm)$. It follows that the spectrum of of $\Wm$ consists of all points of growth of $N(\theta)$, i.e.
\begin{equation}
  \sigma(\Wm)=\{\exp(i\theta):N(\theta-\varepsilon)-N(\theta+\varepsilon)>0\:\forall\varepsilon>0\}.
\end{equation}

In particular, by weak continuity of the trace \cite{shubinDML}, $\exp(i\theta)$ lies in the point spectrum of $\Wm$, if and only if $N$ is discontinuous at $\theta$. Consequently, proving continuity for the spectral distribution function proves the absence of point spectrum of $\Wm$. In particular, we show that:
\begin{thm}\label{thm:SDF_cont}
  The SDF of the magnetic walk $\Wm$ defined in \eqref{eq:dirac_mag_walk} is continuous, i.e.\ $\sigma_{pp}  (\Wm) = \emptyset$.
\end{thm}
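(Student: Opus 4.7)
The plan is to argue by contradiction: suppose $W_\MF$ has an eigenvalue $e^{i\theta}$ and let $P := \chi_{\{e^{i\theta}\}}(W_\MF)$ be the associated spectral projection. Since $W_\MF \in \mm_2(\vNA)$ and the Borel functional calculus stays inside $W^*(W_\MF) \subseteq \mm_2(\vNA)$, we have $P \in \mm_2(\vNA)$. The preceding lemma shows that $\mm_2(\vNA)$ has trivial center, so it is a $\mathrm{II}_1$-factor and the trace $\tau_2$ is faithful; consequently $P \neq 0$ would force $\tau_2(P) > 0$. By the diagonal-constancy Lemma~\ref{lem:diag}, the $2\times 2$ blocks $P_{\ell\ell}$ all coincide with $M := P_{00} \in \mm_2(\Cx)$, which is positive semidefinite, and $\tau_2(P) = \tfrac{1}{2}\tr(M)$, so in particular $\tr(M) > 0$.

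The next step is to translate the operator identity $W_\MF P = e^{i\theta} P$ into a constraint on the ``symbol'' $p(n) := P_{n,0} \in \mm_2(\Cx)$. The covariance consequence of Lemma~\ref{lem:diag} applied to both factors yields $P_{k,\ell} = p(k-\ell)\,e^{i\MF k\wedge \ell}$ and an analogous expression $(W_\MF)_{k,\ell} = w(k-\ell)\,e^{i\MF k\wedge \ell}$, where the symbol $w$ is supported on the four points $\{\pm e_1 \pm e_2\}$ since $W_\MF$ has range one. Plugging these into the eigenvalue equation, the position-dependent phases $e^{i\MF k\wedge \ell}$ cancel out and one obtains the translation-invariant twisted recurrence
\[
\sum_{j \in \{\pm e_1 \pm e_2\}} w(j)\,p(n-j)\,e^{i\MF j\wedge n} = e^{i\theta}\,p(n), \qquad n \in \Ir^2.
\]
The hypothesis (cf.\ Remark~\ref{rem:sym_gauge_coins}) that the coin is neither purely diagonal nor purely off-diagonal guarantees that each of the four $w(j)$ is nonzero and genuinely couples both spin components, which will be essential for ruling out nontrivial solutions.

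From $P = P^* = P^2$ one also obtains the symmetry $p(-n) = p(n)^*$ and the quadratic identity $p(0) = \sum_n p(n)^* p(n)$, which in particular gives $\sum_n \|p(n)\|^2 = \tr(M) < \infty$, so $p \in \ell^2(\Ir^2;\mm_2(\Cx))$. The goal is to combine these constraints with the twisted recurrence to force $M = 0$, contradicting $\tr(M) > 0$. I expect this last step to be the main obstacle: in the absence of transfer-matrix techniques one must find an algebraic substitute that exploits the irrationality of $\MF/(2\pi)$. One natural line, in the spirit of Shubin's argument for the Harper operator, is to conjugate $P$ by increasingly long words in the magnetic translations $T_1, T_2$—producing projections of equal trace—and to use the Weyl-like commutation relation $T_1 T_2 = e^{-i\MF} T_2 T_1$ together with the irrationality assumption to show that the resulting family is sufficiently ``spread out'' in $\mm_2(\vNA)$ to make $\tau_2(P) > 0$ untenable, thereby completing the proof.
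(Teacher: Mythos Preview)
Your setup through the first two paragraphs is sound: $P \in \mm_2(\vNA)$, faithfulness of $\tau_2$, and the diagonal-constancy from Lemma~\ref{lem:diag} are all correct (modulo a missing factor $1/2$ in the exponent: Lemma~\ref{lem:diag} actually gives $P_{k,\ell} = p(k-\ell)\,e^{i\frac{\MF}{2}k\wedge\ell}$). But the proof stops precisely where the real work begins. You yourself flag the step ``force $M=0$'' as ``the main obstacle'' and then offer only a heuristic about conjugating $P$ by words in $T_1, T_2$. That heuristic does not close the gap: conjugation by elements of $\vNA$ preserves $\tau_2$ trivially (it is a trace), and a $\mathrm{II}_1$-factor happily accommodates infinitely many equal-trace projections, so nothing along those lines forces $\tau_2(P)=0$. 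More fundamentally, the twisted recurrence you derive for the symbol $p$ is, up to the matrix structure, the original eigenvalue equation for $W_\MF$: each nonzero column of $p$ is an $\ell^2$ solution of $W_\MF\psi = e^{i\theta}\psi$. You have reformulated the problem, not reduced it.

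The paper takes a completely different route that avoids this circularity. It proves continuity of the SDF by identifying it with the integrated density of states (via Lemma~\ref{lem:diag} and a L\'evy-type continuity argument for the Fourier transforms of the finite-volume spectral measures) and then proving continuity of the IDS by a Delyon--Souillard counting argument (Proposition~\ref{prop:IDS_cont}): on a box $\Lambda_L$ of side $2L+1$, any generalized eigenfunction of a walk of the form \eqref{eq:dirac_walk} is uniquely determined by its values on a boundary set of cardinality $O(L)$ --- this is exactly where the assumption that the coins are not purely (off-)diagonal is used, via the explicit recursion \eqref{eq:downSpinRecursion}. Hence the eigenvalue multiplicity of the unitary finite-volume restriction $W_L$ at any point is $O(L)$, while $|\Lambda_L| = O(L^2)$; dividing and sending $L\to\infty$ gives continuity of the IDS and hence of the SDF. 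The von Neumann algebra input (Lemma~\ref{lem:diag}, $\mm_2(\vNA)$) is used only to identify the SDF with the IDS, not to rule out eigenfunctions directly.
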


To prove this theorem we will show, that the SDF is equal to the integrated density of state (IDS), which we then show to be continuous by the Delyon--Souillard argument \cite{delyon1984remark}. There is an additional challenge present here when compared to the Hamiltonian case. Namely, the restriction of a self-adjoint operator to a subspace remains self-adjoint, but the restriction of a unitary operator need not even be normal. Thus, for our purposes, we need suitable unitary restrictions of $\Wm$ to finite $\Lambda\subset\Ir^2$, which we achieve by decoupling the walk locally around the edge of $\Lambda$ using purely off-diagonal coins. Since this does not depend on the magnetic field but only on the neighbourhood structure, we will formulate this decoupling more generally for walks of the form \eqref{eq:dirac_walk}.

In the following, let $\Lambda = \Lambda_L \subset \Ir^2$ be a square of side length $2L+1$, centered at the origin of $\Ir^2$, i.e.\
\begin{equation}\label{eq:lambda}
\Lambda=\left\{x=(x_1,x_2) \in \Ir^2 \colon (|x_1|\leq L)\, \land\, (|x_2|\leq L)\right\},
\end{equation}
for some $L>0$.  The boundary $\partial\Lambda$ of $\Lambda$ is then given by the set
\begin{equation}
\partial\Lambda=\left\{x\in\Lambda\colon (|x_1|= L)\, \lor\, (|x_2|= L)\right\}.
\end{equation}
For the unitary decoupling, we need a slightly different bounding set, namely the union of the upper and the right edges of $\Lambda_L$ and the lower and the left edges of $\Lambda_{L+1}$. To that end, put
\begin{equation}
\Delta\Lambda
\equiv\partial\left\{x=(x_1,x_2) \in \Ir^2 \colon \left(-L-1\leq x_1\leq L\right)\,\land\,(-L-1\leq x_2\leq L)\right\}
\end{equation}
Denote by $P_L : \ell^2(\Ir)^2 \otimes \Cx^2 \to \ell^2(\Lambda_L) \otimes \Cx^2$ the canonical projection.

\begin{lem}
	Let $L \in \Nl$, let $\Lambda = \Lambda_L$ be as in \eqref{eq:lambda} and let $W$ be a walk of the form \eqref{eq:dirac_walk}. Then there is a walk $W_d$, such that $W_L=P_L W_d P_L^*$ is unitary on $P_L\HH$ and $W-W_d$ differs from zero only on the set
	\begin{equation}
	\left(\Delta\Lambda\right)_2=\left\{x\in\Ir^2 : \dist(x,\Delta\Lambda)\leq 2 \right\}.
	\end{equation}
\end{lem}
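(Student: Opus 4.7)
The plan is to design modified coins $C_1^d, C_2^d$ that agree with $C_1, C_2$ outside a thin layer around $\Delta\Lambda$ and are chosen on the layer so that $W_d = S_1 C_1^d S_2 C_2^d$ sends $\mathcal{H}_L := \ell^2(\Lambda_L)\otimes\Cx^2 = \mathrm{ran}(P_L^*)$ into itself. Once this invariance is secured, $W_L = P_L W_d P_L^*$ is a unitary on $P_L\HH$ automatically, since $W_d$ then restricts to a unitary on the finite-dimensional invariant subspace $\mathcal{H}_L$, which is canonically isomorphic to $P_L\HH$ via $P_L$. Equivalently, I must arrange that every matrix element $\langle y, t | W_d | x, s\rangle$ vanishes whenever $x\in\Lambda_L$ and $y\notin\Lambda_L$; since the walk couples only diagonal neighbours $\ket{x,s}\mapsto\ket{x+(\pm 1,\pm 1),\cdot}$, these leak pairs $(x,y)$ are confined to $(\Delta\Lambda)_2$.

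The main tool is a one-dimensional reflection trick applied to the factorisation $W = U_1 U_2$ with $U_\alpha = S_\alpha C_\alpha$. The operator $U_2 = \bigoplus_{x_1} U_2^{(x_1)}$ decomposes as a direct sum of one-dimensional walks along columns in the $x_2$-direction, and analogously $U_1 = \bigoplus_{x_2} U_1^{(x_2)}$ along rows. For a one-dimensional walk $SC$, replacing the coins at two adjacent sites $M, M+1$ by the off-diagonal unitary $\sigma_1$ decomposes the walk into a direct sum of two half-line walks plus an isolated two-cycle $\{\ket{M,-},\ket{M+1,+}\}$ straddling the cut. I would apply this column-wise using the row pairs $x_2\in\{L,L+1\}$ and $\{-L-1,-L\}$, and row-wise using the column pairs $x_1\in\{L,L+1\}$ and $\{-L-1,-L\}$, so that each one-dimensional slice of $W_d$ acquires an explicit invariant decomposition. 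The asymmetric shape of $\Delta\Lambda$ --- top and right edges of $\Lambda_L$ together with bottom and left edges of $\Lambda_{L+1}$ --- is precisely the geometry required by the chirality $S_\alpha = \sum_s t_\alpha^s\otimes P_s$ in order for the "middle" pieces of the four one-dimensional decompositions to align consistently with $\mathcal{H}_L$ rather than with a slightly shifted subspace.

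The hardest step will be reconciling the row- and column-modifications at the corner sites of $\Delta\Lambda$, where two competing off-diagonal conditions would otherwise be imposed on the same coin. The width-$2$ buffer in $(\Delta\Lambda)_2$ is exactly the latitude needed to resolve these conflicts by auxiliary local adjustments of the coins on immediately adjacent sites, which can still be chosen inside $(\Delta\Lambda)_2$. Once the coins are fully specified, $W_d$ is unitary by construction as a product of unitary coins with unitary shifts; the support statement $W - W_d = 0$ off $(\Delta\Lambda)_2$ follows because any change of $C_2$ at a site $z$ alters the action of $W$ only on $\ket{z,\cdot}$, and any change of $C_1$ at $z$ only on $\ket{z\pm e_2,\cdot}$, all still within $(\Delta\Lambda)_2$; and $W_d(\mathcal{H}_L)\subseteq\mathcal{H}_L$ is verified by a direct leak-by-leak case analysis of the matrix elements $\langle y, t | W_d | x, s\rangle$ as $x$ ranges over $\partial\Lambda_L$ and $y = x+(\pm,\pm)$ over its diagonal neighbours.
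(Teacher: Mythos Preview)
Your plan has a structural obstruction that cannot be repaired within the scheme ``modify only $C_1$ and $C_2$''. Consider the top edge $x_2=L$. For any $x_1$ and any spin $s$, the vector $C_2^d\ket{(x_1,L),s}$ is a unit vector in $\Cx^2$; after $S_2$ its $\ket{+}$-component sits at $x_2=L+1$, and since $S_1C_1^d$ does not change $x_2$, this part of $W_d\ket{(x_1,L),s}$ lies outside $\mathcal H_L$. Invariance of $\mathcal H_L$ would therefore force $C_2^d(x_1,L)\ket{+}$ and $C_2^d(x_1,L)\ket{-}$ to both lie in $\mathrm{span}\{\ket{-}\}$, which is impossible for a unitary $2\times2$ matrix. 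The same obstruction appears at $x_2=-L$. This is exactly why your one\nobreakdash-dimensional reflection trick produces an extra two\nobreakdash-cycle $\{\ket{M,-},\ket{M+1,+}\}$ straddling the cut rather than clean invariance of $\ell^2([-L,L])\otimes\Cx^2$: that leftover two\nobreakdash-cycle is not an artifact one can massage away at the corners, it is the manifestation of the obstruction above, and in the composition $S_1C_1^dS_2C_2^d$ it is never undone because the subsequent factor only moves in the $e_1$\nobreakdash-direction. Since $P_LW_dP_L^*$ is unitary on $P_L\HH$ if and only if $W_d\mathcal H_L\subseteq\mathcal H_L$, your construction cannot deliver the lemma as stated.

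The paper avoids this by \emph{not} touching $C_1,C_2$ at all. Instead it factors each shift as $S_\alpha=S_\alpha^\uparrow S_\alpha^\downarrow$ and inserts a single extra coin between the half\nobreakdash-shifts,
\[
\widetilde S_\alpha \;=\; S_\alpha^\uparrow\, C_d\, S_\alpha^\downarrow,\qquad
C_d(x)=\begin{cases}\sigma_1 & x\in\Delta\Lambda,\\ \idty_2 & \text{else},\end{cases}
\]
and sets $W_d=\widetilde S_1 C_1 \widetilde S_2 C_2$. A one\nobreakdash-site reflector placed between half\nobreakdash-shifts gives, in one dimension, $\ket{M,+}\mapsto\ket{M,-}$ and $\ket{M+1,-}\mapsto\ket{M+1,+}$, so that $\{x\le M\}$ and $\{x\ge M+1\}$ are each genuinely invariant under $\widetilde S_\alpha$ (no residual two\nobreakdash-cycle). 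With the asymmetric choice of $\Delta\Lambda$ these one\nobreakdash-dimensional splits align with $\Lambda_L$ in both coordinate directions simultaneously, and the corner issue you anticipated does not arise because a single coin $C_d$ is used for both directions rather than two separate modifications competing at the same site.
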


\begin{proof}
	Define the coin operator $C_d$, which acts locally as
	\begin{equation}
	C_d(x)=\begin{cases} \sigma_1    &   x\in\Delta\Lambda    \\  \idty_2   & \text{else}\end{cases}
	\end{equation}
	with $\sigma_1$ being the first Pauli matrix. Moreover replace $S_\alpha,\alpha=1,2$ by
	\begin{equation}
	S_\alpha\mapsto\widetilde S_\alpha= S_\alpha^\uparrow C_dS_\alpha^\downarrow,
	\end{equation}
	where $S_\alpha^{\uparrow/\downarrow}$ shift the spin-up component in the positive and the spin-down component in the negative direction, respectively.
	The resulting walk
	\begin{equation}
	W_d=\left(S_1^\uparrow C_dS_1^\downarrow\right) C_{1}\left(S_2^\uparrow C_dS_2^\downarrow\right) C_{2}
	\end{equation}
	is decoupled between every connection of $\Lambda$ and $\Lambda^c$, i.e. it fulfills the requirements in the Lemma, which is straightforward to check.
\end{proof}
\begin{remark}
  It follows from this proof that the asymmetry in the definition of $\Delta\Lambda$ is needed because of the left/right (up/down) asymmetry of $S_\alpha^\uparrow C_dS_\alpha^\downarrow$. Instead, one could have insisted on a symmetric $\Delta\Lambda$ and had to choose $\widetilde S_\alpha$ asymmetric.
\end{remark}

Denoting by $\abs\Lambda$ the number of points in a finite set $\Lambda$ and by $(\partial\Lambda)_n$
the elements of $\Lambda$ with distance less than $n\geq1$ from the boundary, we write $\Lambda\to\infty$ whenever for some fixed $n\geq1$
\begin{equation}
  \frac{\abs{(\partial\Lambda)_n}}{\abs\Lambda}\to0.
\end{equation}
In particular, we note that $\Lambda_L \to \infty$ as $L \to \infty$. This together with the decoupling above allows us to define the density of states measure (DOS) for the magnetic walk $\Wm$ as the accumulation function of the vague limit of eigenvalue counting measures of the unitary truncations. That is, given $L$, define a measure $dk_L$ by $\int_\torus g \, dk_L = (2|\Lambda_L|)^{-1}\tr(g(W_L))$. Then, the DOS, denoted $dk$, is given by
\begin{equation}
\int_{\torus} g \, dk
=
\lim_{L \to \infty} \int_\torus g \, dk_{L}
=
\lim_{L \to \infty} \frac{1}{2|\Lambda_L|} \tr(g(W_{L}))
\end{equation}
for bounded Borel $g$,
and $k(\theta)$ is given by
\begin{equation}
  k(\theta)
  = \int_\torus \chi_{I(\theta)} \, dk
=   \lim_{L\to\infty}\frac{1}{2|\Lambda_L|} \tr\left( \chi_{I(\theta)}(W_{L}) \right).
\end{equation}
The existence of the limit in the definition of $k(\theta)$ follows from the arguments in \cite{pastur1980}. Important for our purpose is that $\theta\mapsto k(\theta)$ is continuous which follows from a proof similar to that in \cite{delyon1984remark}:
\begin{prop}\label{prop:IDS_cont}
  Assume that the coins $C_j$ are either not completely diagonal or not completely off-diagonal. Then $\theta\mapsto k(\theta)$ is continuous.
\end{prop}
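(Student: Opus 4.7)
The plan is to follow the Delyon--Souillard strategy: establish a uniform bound on eigenvalue multiplicities of the finite-volume unitaries $W_L$, and then pass to the thermodynamic limit to conclude that the accumulation function $k$ has no jumps.

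First I would write out the finite-volume eigenvalue equation $W_L\psi = e^{i\theta}\psi$ using the factorisation $\Wm = S_1 C_1 S_2 C_2$. Unwinding the coins and shifts yields the local identity
\begin{equation*}
e^{i\theta}\psi(x,s) = \sum_{s',s''\in\{\pm1\}}(C_1(x-se_1))_{s,s'}(C_2(x-se_1-s'e_2))_{s',s''}\,\psi(x-se_1-s'e_2,s''),
\end{equation*}
which couples $\psi(x,s)$ to the four diagonal neighbours of $x$. Reading it for $s=+$ shows that the values of $\psi$ with spin $+$ on column $x_1$ are determined by the full data on column $x_1-1$; reading it for $s=-$ gives the analogous propagation in the opposite direction. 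Away from $\Delta\Lambda$ this is simply the equation for $\Wm$, while on $\Delta\Lambda$ the replacement of the coin by $\sigma_1$ in the decoupling construction supplies the boundary conditions that render $W_L$ unitary.

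The central step is the multiplicity bound
\begin{equation*}
\dim\ker(W_L-e^{i\theta}\idty)\le CL,
\end{equation*}
with $C$ depending only on $C_1,C_2$. To prove it, I would argue by transfer-matrix propagation in the $x_1$-direction. Given the values of $\psi$ on two adjacent columns, the identity above directly produces the $+$-component of the next column; the missing $-$-component is then recovered by interpreting the $s=-$ equation on the intermediate column as a second-order difference equation in $x_2$ whose coefficients are products of entries of $C_1,C_2$. This is exactly where the hypothesis that each $C_j$ is neither purely diagonal nor purely off-diagonal is used: it ensures that both the spin-preserving and the spin-flipping entries occurring in those coefficients are nonzero, so that the difference equation is invertible subject to the Dirichlet-type conditions from the decoupling. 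Iterating, any eigenvector is fixed by its values on two adjacent columns, embedding $\ker(W_L-e^{i\theta}\idty)$ into a space of dimension $4(2L+1) = O(L)$.

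The multiplicity bound translates into $k_L(\{e^{i\theta}\})\le C/(2L)$ uniformly in $\theta$, and the Delyon--Souillard passage to the limit then shows that the monotone limit $k$ is continuous. The main obstacle I foresee is the verification of solvability of the recursion for the $-$-component: the relevant linear system acts on $O(L)$ sites in $x_2$ simultaneously, its matrix is determined by the quasi-periodic phases from the symmetric gauge in $C_j$, and it is further perturbed near $\Delta\Lambda$ by the $\sigma_1$ decoupling coin. Establishing invertibility uniformly in $\theta$ and $L$ purely from the non-degeneracy hypothesis on $C_1,C_2$ is the delicate part of the argument.
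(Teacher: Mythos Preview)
Your overall strategy is exactly the paper's: a Delyon--Souillard argument via an $O(L)$ bound on $\dim\ker(W_L-e^{i\theta})$. Where you diverge is in the choice of determining data, and this is precisely what creates the obstacle you flag at the end.

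You parametrise a solution by its restriction to two adjacent columns and then read the $s=-$ equation on the intermediate column as a recursion in $x_2$ for the missing $-$-components on the next column. As you observe, this is a step-two recursion; it needs two seed values at the top (or bottom) of the column, which your two-column data does not supply. You are then left hoping that the decoupling at $\Delta\Lambda$ closes the system, and you rightly worry about uniformity in $\theta$ and $L$.

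The paper sidesteps this by enlarging the boundary set slightly: it takes the leftmost column \emph{together with} the two top rows $x_2=L,\,L-1$ and the bottom row $x_2=-L$ (still $O(L)$ sites). With the row data available, the $x_2$-recursion becomes an explicit scalar downward sweep: the $s=-$ equation at $x-e_1+e_2$, solved for $\psi(x,-)$, expresses it in terms of $\psi(x,+)$, $\psi(x-e_1+e_2,-)$ and $\psi(x+2e_2,\pm)$, all of which are already known at that stage. The only divisions are by the off-diagonal coin entries $c_{21}^1$ and $c_{12}^2$, which are nonzero exactly under the hypothesis ``not completely diagonal''. No $O(L)\times O(L)$ linear system appears, and solvability is automatic and independent of $\theta$ and $L$.

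Two smaller corrections. First, the hypothesis is a disjunction, not a conjunction: one needs either that the coins are nowhere diagonal \emph{or} nowhere off-diagonal, not both; the second case is handled by the mirror-image sweep from the bottom. Second, one does not need to analyse the decoupled walk near $\Delta\Lambda$ at all---it suffices to bound the dimension of the space of formal solutions of $W\psi=z\psi$ in the bulk of $\Lambda$ by the restriction map to the boundary set; the modification on $(\Delta\Lambda)_2$ affects only $O(L)$ sites and cannot spoil an $O(L)$ bound.
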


\begin{proof}
  To prove the continuity of $\theta\mapsto k(\theta)$ it is enough to prove that
  \begin{equation}
    |\Lambda_L|^\inv\tr(\chi_{\{\exp(i\theta)\}}(W_L))\to0,\quad\theta\in\torus,
  \end{equation}
  as $L \to \infty$.  Let $\Lambda=\Lambda_L$ be defined as in \eqref{eq:lambda} and without loss, assume that the coins $C_j$ are not completely diagonal. Then, inside $\Lambda$ a solution $\psi$ of the generalized eigenvalue equation $W\psi=z\psi$ is uniquely determined by its values on the set
  \begin{equation}\label{eq:ids_cont_boundary}
    \{x\in\Lambda:x_1=-L\}\cup\{x\in\Lambda:x_2=-L,L-1,L\}.
  \end{equation}
  This can be seen by explicitly evaluating the action of a walk of the form \eqref{eq:dirac_walk} with coins $C_1$ and $C_2$ at position $x\in\Ir^2$:
  \begin{align}
\label{eq:WCoordActionUp}
 (W\psi)(x,+)&=c^1_{11}(x-e_1)\left(c^2_{11}(x-e_1-e_2)\psi(x-e_1-e_2,+)+c^2_{12}(x-e_1-e_2)\psi(x-e_1-e_2,-)\right)\\
 \nonumber
 &\quad+c^1_{12}(x-e_1)\left(c^2_{21}(x-e_1+e_2)\psi(x-e_1+e_2,+)+c^2_{22}(x-e_1+e_2)\psi(x-e_1+e_2,-)\right) \\
 \label{eq:WCoordActionDown}
    (W\psi)(x,-)&=c^1_{21}(x+e_1)\left(c^2_{11}(x+e_1-e_2)\psi(x+e_1-e_2,+)+c^2_{12}(x+e_1-e_2)\psi(x+e_1-e_2,-)\right)\\
    \nonumber
    &\quad+c^1_{22}(x+e_1)\left(c^2_{21}(x+e_1+e_2)\psi(x+e_1+e_2,+)+c^2_{22}(x+e_1+e_2)\psi(x+e_1+e_2,-)\right).
  \end{align}

The algorithm to determine a solution $\psi$ for $W\psi=z\psi$ inside $\Lambda$ from its values on the set \eqref{eq:ids_cont_boundary} is as follows: first, by \eqref{eq:WCoordActionUp}, the values of a solution $\psi$ on the set \eqref{eq:ids_cont_boundary} immediately determine the values of $\psi(x,+)$ for $\{x:x_1=-L+1,x_2=-L+2,\dots,L-2\}$, see the left picture in Figure \ref{fig:IDS_cont}.

In the second step the values of $\psi(x,-)$ are determined subsequently for $x_1=-L+1$ ,$x_2=L-2,L-3\dots,-L+1$. The key point is this: if $\psi$ solves the generalized eigenvalue equation, then, since neither $C_1$ nor $C_2$ is completely diagonal, $\psi(x,-)$ is uniquely determined by $\psi(x,+)$, $\psi(x-e_{1}+e_{2},-)$, and $\psi(x+2e_{2},\pm)$ via \eqref{eq:WCoordActionDown}. Concretely,
\begin{equation} \label{eq:downSpinRecursion}
\begin{split}
\psi(x,-)
& =
\frac{1}{c_{12}^2(x)}\bigg[ \frac{1}{c_{21}^1(x+e_{2})} \Big[ z\psi(x-e_{1}+e_{2},-)
-
c_{22}^1(x+e_{2}) [c_{21}^2(x+2e_{2})\psi(x+2e_{2},+)  \\
& \qquad\qquad\qquad + c_{22}^2(x+2e_{2})\psi(x+2e_{2},-)]
 - c_{11}^2(x)\psi(x,+) \Big] \bigg].
\end{split}
\end{equation}

\begin{figure}[t]
  \begin{center}
    \includegraphics[width=.7\textwidth]{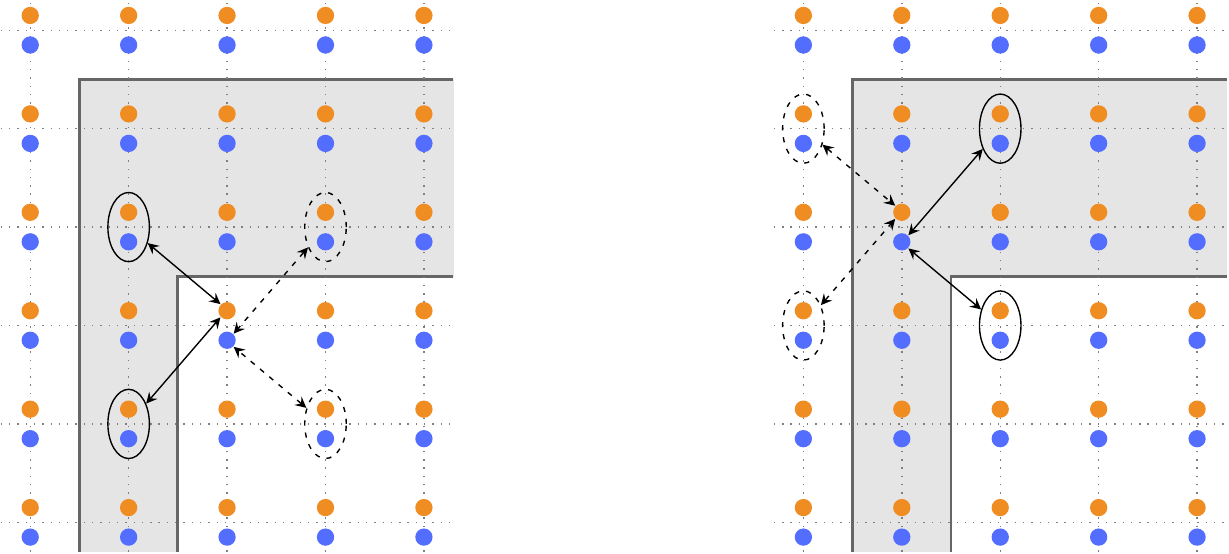}
    \caption{\label{fig:IDS_cont}The algorithm to determine a generalized eigenfunction $\psi$ of $\Wz$ of the form \eqref{eq:dirac_walk} in the proof of Proposition \ref{prop:IDS_cont}. At each lattice site $x\in\Ir^2$ the internal degrees of freedom of $\psi$ are distinguished by the colouring orange for $\psi(x,+)$ and blue for $\psi(x,-)$. The arrows indicate which degrees of freedom are coupled by the action of $W$ (compare with \eqref{eq:WCoordActionUp} and \eqref{eq:WCoordActionDown}), and the shaded area indicates the boundary set \eqref{eq:ids_cont_boundary}. Note that the diagonal arrows to the left and the right of $x$ disappear for diagonal coins at $x-e_1$ and $x+e_1$, respectively, whereas the anti-diagonal arrows disappear for off-diagonal coins.
    Left: in the first step of the algorithm $\psi(x,+)$, $x_1=-L+1,\abs{x_2}\leq L-2$ is obtained from the boundary values $\psi(x)$, $x_1=-L,\abs{x_2}\leq L-1$. Right: in the second step, $\psi(x,-)$, $x_1=-L+1$ is determined successively for $x_2=L-2,L-3,\dots,-L+2$ by solving the eigenvalue equation $W\psi=z\psi$ on the boundary $x_1=-L,x_2=L-1,L-2,\dots,-L+1$. }
  \end{center}
\end{figure}

But the value of $\psi(x,+)$ on the strip $x_1=-L+1$, $x_2=L-2,L-3,\dots,-L+1$ we just calculated, and they determine $\psi(x,-)$ with $x_1=-L+1$ subsequently for $x_2=L-2,L-3,\dots,-L+1$, see the right picture in Figure \ref{fig:IDS_cont}.

Repeating this algorithm on each strip from the left to the right determines $\psi$ completely inside $\Lambda$ from its values on the set \eqref{eq:ids_cont_boundary}. We can thus repeat the argument given in \cite{delyon1984remark}, namely
\begin{equation}
  \dim(\ker(W_L - e^{i\theta}))
  \leq2\abs{\eqref{eq:ids_cont_boundary}} =  2(8L+1)  < 8(2L+1)
\end{equation}
where the prefactor takes into account that we have to take one strip on the left (or right) side and two on the top/botton, and that the internal degree of freedom is two-dimensional. Then, since $\abs\Lambda=\left(2L+1\right)^2$ we have
\begin{equation}
  \frac1{\abs\Lambda}\tr(\chi_{\{\exp(i\theta)\}}(W_L))<\frac{8}{2L+1}\to0
\end{equation}
as $\Lambda\to\infty$.
\end{proof}
\begin{remark}
  Note that under the assumption that the coins are not completely \emph{off-diagonal} the proof is completely analogous after replacing the set \eqref{eq:ids_cont_boundary} by
  \begin{equation}
    \{x\in\Lambda:x_1=-L\}\cup\{x\in\Lambda:x_2=-L,-L+1,L\}.
  \end{equation}
  In this case we solve for $\psi(x,-)$ as in \eqref{eq:downSpinRecursion} but from the bottom of $\Lambda$ instead of from the top, i.e. instead of \eqref{eq:downSpinRecursion} we solve \eqref{eq:WCoordActionDown} as
  \begin{equation}
    \begin{split}
    \psi(x,-)
        & =
        \frac{1}{c_{22}^2(x)}\bigg[ \frac{1}{c_{22}^1(x-e_{2})} \Big[ z\psi(x-e_{1}-e_{2},-)
        -
        c_{21}^1(x-e_{2}) [c_{11}^2(x-2e_{2})\psi(x-2e_{2},+)  \\
        & \qquad\qquad\qquad + c_{12}^2(x-2e_{2})\psi(x-2e_{2},-)]
        - c_{21}^2(x)\psi(x,+) \Big] \bigg],
    \end{split}
  \end{equation}
  and apply the algorithm ``upwards'' instead of ``downwards''. Both requirements in Proposition \ref{prop:IDS_cont} are clearly met in the magnetic walk model where the coins are given in \eqref{eq:symGaugeQPcoins}.
\end{remark}

Coming back to the magnetic walk $W=\Wm$ which in the gauge \eqref{eq:symmetricGauge} indeed is of the form \eqref{eq:dirac_walk}, see Remark \ref{rem:sym_gauge_coins}, Theorem \ref{thm:SDF_cont} follows from
\begin{prop}
  For all $\theta\in[0,2\pi[$ we have
  \begin{equation}
    k(\theta)=N(\theta).
  \end{equation}
\end{prop}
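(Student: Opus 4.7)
The plan is to show that the two probability measures $dk$ and $dN$ on $\torus$ coincide; since $dk$ is continuous by Proposition~\ref{prop:IDS_cont}, this transfers continuity to $N$ and yields $N(\theta)=k(\theta)$ on $[0,2\pi)$. By Stone--Weierstrass, it suffices to verify $\int g\,dN=\int g\,dk$ for every Laurent polynomial $g(z)=\sum_{|n|\leq d}c_n z^n$.

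First, I would use the diagonal case of Lemma~\ref{lem:diag} to express $\int g\,dN$ as an \emph{exact} finite-volume average. Applied to $A=g(\Wm)\in\mm_2(\vNA)$, the lemma asserts $A_{xx}=A_{00}$ for every $x\in\Ir^2$, so
\begin{equation*}
\tr\bigl(P_L\,g(\Wm)\,P_L^*\bigr)=\sum_{x\in\Lambda_L}\tr_{\Cx^2}(A_{xx})=|\Lambda_L|\,\tr_{\Cx^2}(A_{00})=2|\Lambda_L|\,\tau_2(g(\Wm))=2|\Lambda_L|\int g\,dN.
\end{equation*}
Crucially, this identity holds for every finite $L$, not merely in the limit.

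Second, I would compare $\tr(P_L g(\Wm)P_L^*)$ with $\tr(g(W_L))$ via the decoupling. Since $W_d$ is unitary and leaves $P_L\HH$ invariant by construction, $g(W_L)=P_L\,g(W_d)\,P_L^*$ for every Laurent polynomial $g$. Now $W_d-\Wm$ is supported on $(\Delta\Lambda)_2$ and thus has rank $O(L)$ and operator norm at most $2$; the telescoping identities
\begin{equation*}
W_d^n-\Wm^n=\sum_{j=0}^{n-1}W_d^{j}(W_d-\Wm)\Wm^{n-1-j}\quad(n\geq 1),\qquad W_d^{-n}-\Wm^{-n}=W_d^{-n}(\Wm^n-W_d^n)\Wm^{-n}
\end{equation*}
show that $g(W_d)-g(\Wm)$ has rank $O(dL)$ and uniformly bounded norm. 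Applying $|\tr(X)|\leq\mathrm{rank}(X)\,\|X\|$ for finite-rank $X$ gives $|\tr(P_L[g(W_d)-g(\Wm)]P_L^*)|=O(L)=o(|\Lambda_L|)$, and dividing by $2|\Lambda_L|$ together with Step~1 yields $\int g\,dk=\lim_L(2|\Lambda_L|)^{-1}\tr(g(W_L))=\int g\,dN$.

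The two steps combined give $dk=dN$, hence $N(\theta)=k(\theta)$. The main obstacle is Step~2: one must exploit that $W_d-\Wm$ lives on a set of size $O(L)$ while the bulk grows like $L^2$ in order to keep the trace-perturbation strictly subdominant for each fixed polynomial degree. Lemma~\ref{lem:diag} is the key algebraic input to Step~1, converting what is usually a Shubin-type thermodynamic-limit identity into an exact equality valid at every finite $L$, and thereby sidestepping any separate limiting argument on the $N$-side.
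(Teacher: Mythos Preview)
Your proposal is correct and follows essentially the same route as the paper. Both arguments use Lemma~\ref{lem:diag} to identify $\tau_2(\Wm^t)$ with the bulk average, bound the boundary discrepancy between $W_L^t$ and $\Wm^t$ by an $O(L)$-versus-$O(L^2)$ count (you via the rank of the telescoped difference, the paper via finite propagation speed), and conclude $dk=dN$ from equality of all moments (you by Stone--Weierstrass, the paper by the L\'evy continuity theorem, Lemma~\ref{lem:levy}).
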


For the proof of the equivalence of the IDS and the SDF we need the following variant of the celebrated Levy continuity theorem which is a special case of \cite[Theorem 4.2.5]{applebaum} for the compact group $\torus$:
\begin{lem}\label{lem:levy}
  Let $\mu_n$, $n\in\Nl$ be a sequence of probability measures on $\torus$. Then $\mu_n\xrightarrow w\mu$ weakly     if and only if $\widehat \mu_n\to\widehat\mu$ pointwise, where $\widehat \mu_n$ denotes the Fourier transform of $\mu_n$, which is defined by $\widehat \mu_n(t) =\int_\torus e^{it\theta}\mu_n(d\theta)$, $t \in \Ir$.
\end{lem}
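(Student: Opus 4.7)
The plan is to prove both implications of this compact-group analogue of Lévy's continuity theorem separately, exploiting the fact that $\torus$ is compact so no tightness argument is required.

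For the forward direction, observe that for each $t \in \Ir$ the function $\theta \mapsto e^{it\theta}$ lies in $C(\torus,\Cx)$. Hence if $\mu_n \xrightarrow{w} \mu$, the definition of weak convergence applied to this continuous test function immediately yields $\widehat{\mu}_n(t) \to \widehat{\mu}(t)$ for every $t \in \Ir$.

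For the converse, the key input is that the algebra $\mathcal{P}$ of trigonometric polynomials $\bigl\{\sum_{|t| \leq N} c_t e^{it\theta} : N \in \Nl,\: c_t \in \Cx \bigr\}$ is dense in $C(\torus,\Cx)$ in the supremum norm. This follows from the complex Stone--Weierstrass theorem, since $\mathcal{P}$ is a unital subalgebra, separates the points of $\torus$, and is closed under complex conjugation (as $\overline{e^{it\theta}} = e^{-it\theta}$). Given $f \in C(\torus,\Cx)$ and $\varepsilon > 0$, I would pick $p \in \mathcal{P}$ with $\|f - p\|_\infty < \varepsilon$ and decompose
$$\int f \, d\mu_n - \int f \, d\mu = \int (f-p) \, d\mu_n + \int p \, d(\mu_n - \mu) + \int (p-f) \, d\mu.$$
Because $\mu_n$ and $\mu$ are probability measures, the first and third summands are bounded in modulus by $\varepsilon$, uniformly in $n$. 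The middle summand is a \emph{finite} linear combination of the differences $\widehat{\mu}_n(t) - \widehat{\mu}(t)$ with $|t|\leq N$, hence tends to zero as $n \to \infty$ by the pointwise convergence hypothesis. Letting $n \to \infty$ first and then $\varepsilon \to 0$ yields $\int f \, d\mu_n \to \int f \, d\mu$ for every $f \in C(\torus,\Cx)$, which is exactly the weak convergence $\mu_n \xrightarrow{w} \mu$.

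I do not anticipate a serious obstacle: the compactness of $\torus$ removes the tightness step that dominates the classical proof on $\Rl$, and Stone--Weierstrass disposes of the density of trigonometric polynomials in one stroke. The only point requiring care is to invoke the \emph{complex} version of Stone--Weierstrass, which is needed precisely because the characters $e^{it\theta}$ are complex-valued; closure of $\mathcal{P}$ under conjugation is what makes this work.
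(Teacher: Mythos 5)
Your proof is correct. Note that the paper does not actually prove this lemma itself; it simply cites it as a special case of a Lévy continuity theorem for compact groups (Applebaum, Theorem 4.2.5). Your approach replaces that citation with a self-contained, elementary argument: the forward direction is immediate since each character $\theta\mapsto e^{it\theta}$ is bounded and continuous, and the converse uses the $\varepsilon/3$ decomposition together with the density of trigonometric polynomials in $C(\torus,\Cx)$, supplied by the complex Stone--Weierstrass theorem. What the elementary route buys is transparency and independence from external machinery; it also makes explicit why compactness of $\torus$ matters (it eliminates the tightness step that dominates the proof on $\Rl$, and guarantees a priori that no mass escapes). What the cited reference buys is generality --- Applebaum's theorem handles arbitrary compact (Lie) groups and in fact the stronger statement in which one only assumes pointwise convergence of $\widehat\mu_n$ to some function, without positing the limit measure in advance. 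For the specific use in the paper, where the limit measure $dN$ is given, your version is entirely sufficient, and the only point deserving emphasis --- invoking the \emph{complex} Stone--Weierstrass theorem and noting closure of the polynomial algebra under conjugation --- you have already flagged correctly.
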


\begin{proof}[Proof of Theorem \ref{thm:SDF_cont}]
 Given $t\in\Ir$ and $L \in \Nl$, we have
  \begin{equation}
    \braketop{x,s}{W_{L}^t }{x,s}=\braketop{x,s}{W^t}{x,s},
  \end{equation}
  for $s=\pm1$ and $x\in\Lambda_L$ with $\dist(x,\partial \Lambda_L ) > 2|t|$. This implies that
  \begin{equation}
    \lim_{L\to\infty}\frac1{2|\Lambda_L|}\sum_{x\in\Lambda}\sum_s\left(\braketop{x,s}{W_{L}^t}{x,s}-\braketop{x,s}{W^t}{x,s}\right)=0.
  \end{equation}
By Lemma \ref{lem:diag}, we have $\braketop{x,s}{W^t}{x,s}= \braketop{0,s}{W^t}{0,s}$ for all $x \in \Ir^2$ and $s = \pm1$, and hence
  \begin{equation}\label{eq:equaltraces}
    \lim_{L\to\infty}\frac1{2|\Lambda_L|}\tr\left(W_{L}^t\right)=\tau_2(W^t).
  \end{equation}
  The expression $(2|\Lambda_L|)^{-1}\tr\left(W_{L}^t\right)$ on the left side of this expression is equal to the Fourier transform of the measure $dk_{L}$ whereas the right side is the Fourier transform of the measure $dN$. The statement then follows from Lemma \ref{lem:levy}.
\end{proof} 
\section*{Acknowledgements}
C. Cedzich acknowledges support by the projet PIA-GDN/QuantEx P163746-484124 and by {\em DGE -- Minist\`{e}re de l'Industrie}.

J. Fillman thanks Svetlana Jitomirskaya for helpful conversations.

T. Geib acknowledge support from the DFG SFB 1227 DQmat.

A. H. Werner thanks the VILLUM FONDEN for its support with a Villum Young Investigator Grant (Grant No. 25452) and its support via the QMATH Centre of Excellence (Grant No. 10059).

\bibliography{mwalksbib}

\begin{thebibliography}{10}

\bibitem{aharonov1993quantum}
Y.~Aharonov, L.~Davidovich, and N.~Zagury.
\newblock Quantum random walks.
\newblock {\em Phys. Rev. A}, 48(2):1687, 1993.

\bibitem{molecules}
A.~Ahlbrecht, A.~Alberti, D.~Meschede, V.~B. Scholz, A.~H. Werner, and R.~F.
  Werner.
\newblock Molecular binding in interacting quantum walks.
\newblock {\em New J. Phys.}, 14:073050, 2012.
\newblock  \href{https://arxiv.org/abs/1105.1051}{{\ttfamily arXiv:1105.1051}}.

\bibitem{SpaceTimeCoinFlux}
A.~Ahlbrecht, C.~Cedzich, R.~Matjeschk, V.~Scholz, A.~H. Werner, and R.~F.
  Werner.
\newblock Asymptotic behavior of quantum walks with spatio-temporal coin
  fluctuations.
\newblock {\em Quantum Inf. Process.}, 11:1219--1249, 2012.
\newblock  \href{https://arxiv.org/abs/1201.4839}{{\ttfamily arXiv:1201.4839}}.

\bibitem{dynloc}
A.~Ahlbrecht, V.~B. Scholz, and A.~H. Werner.
\newblock Disordered quantum walks in one lattice dimension.
\newblock {\em J. Math. Phys.}, 52:102201, 2011.
\newblock  \href{https://arxiv.org/abs/1101.2298}{{\ttfamily arXiv:1101.2298}}.

\bibitem{TRcoin}
A.~Ahlbrecht, H.~Vogts, A.~H. Werner, and R.~F. Werner.
\newblock Asymptotic evolution of quantum walks with random coin.
\newblock {\em J. Math. Phys.}, 52:042201, 2011.
\newblock  \href{https://arxiv.org/abs/1009.2019}{{\ttfamily arXiv:1009.2019}}.

\bibitem{ambainis2003quantum}
A.~Ambainis.
\newblock Quantum walks and their algorithmic applications.
\newblock {\em Int. J. Quantum Inf.}, 1(04):507--518, 2003.
\newblock  \href{https://arxiv.org/abs/quant-ph/0403120}{{\ttfamily
  arXiv:quant-ph/0403120}}.

\bibitem{Ambainis2001}
A.~Ambainis, E.~Bach, A.~Nayak, and A.~V. Watrous.
\newblock One-dimensional quantum walks.
\newblock In {\em Proc. STOC '01}, pages 37--49, New York, 2001. ACM.

\bibitem{applebaum}
D.~Applebaum.
\newblock {\em Probability Measures on Compact Lie Groups}.
\newblock Springer International Publishing, 2014.

\bibitem{AJ2009Ann}
A.~Avila and S.~Jitomirskaya.
\newblock The ten {M}artini problem.
\newblock {\em Ann. Math.}, pages 303--342, 2009.

\bibitem{AJM2017Inv}
A.~Avila, S.~Jitomirskaya, and C.~Marx.
\newblock Spectral theory of extended harper's model and a question by {E}rdos
  and {S}zekeres.
\newblock {\em Invent.\ Math.}, 210:293--339, 2017.
\newblock  \href{https://arxiv.org/abs/1602.05111}{{\ttfamily
  arXiv:1602.05111}}.

\bibitem{berry2011two}
S.~D. Berry and J.~B. Wang.
\newblock Two-particle quantum walks: Entanglement and graph isomorphism
  testing.
\newblock {\em Phys. Rev. A}, 83(4):042317, 2011.
\newblock  \href{https://arxiv.org/abs/1002.3003}{{\ttfamily arXiv:1002.3003}}.

\bibitem{BGVW}
J.~Bourgain, F.~A. Gr\"unbaum, L.~Vel\'azquez, and J.~Wilkening.
\newblock Quantum recurrence of a subspace and operator-valued {S}chur
  functions.
\newblock {\em Commun.\ Math.\ Phys.}, 329:1031--1067, 2014.
\newblock  \href{https://arxiv.org/abs/1302.7286}{{\ttfamily arXiv:1302.7286}}.

\bibitem{BratelliRobinson}
O.~Bratteli and D.~W. Robinson.
\newblock {\em {Operator Algebras and Quantum Statistical Mechanics 1}}.
\newblock Texts and Monographs in Physics. Springer, second edition, 1987.

\bibitem{brown1964bloch}
E.~Brown.
\newblock Bloch electrons in a uniform magnetic field.
\newblock {\em Phys. Rev.}, 133(4A):A1038, 1964.

\bibitem{CGMV}
M.-J. Cantero, F.~A. Gr\"unbaum, L.~Moral, and L.~Vel\'azquez.
\newblock Matrix-valued {S}zeg{\H{o}} polynomials and quantum random walks.
\newblock {\em Commun.\ Pure Appl.\ Math.}, 63:464--507, 2010.
\newblock  \href{https://arxiv.org/abs/0901.2244}{{\ttfamily arXiv:0901.2244}}.

\bibitem{CGMV2}
M.-J. Cantero, F.~A. Gr\"unbaum, L.~Moral, and L.~Vel\'azquez.
\newblock The {CGMV} method for quantum walks.
\newblock {\em Quantum Inf.\ Process.}, 11:1149--1192, 2012.

\bibitem{christalk}
C.~Cedzich.
\newblock
  \href{http://www-fourier.ujf-grenoble.fr/~joye/quawagtalks/Cedzich}{Quantum
  walks in electric fields}, 2012.
\newblock Talk given at the workshop ``Quantum Walks in Grenoble''.

\bibitem{UsOnTop_long}
C.~Cedzich, T.~Geib, F.~A. Grünbaum, C.~Stahl, A.~H. Werner, and R.~F. Werner.
\newblock The topological classification of one-dimensional symmetric quantum
  walks.
\newblock {\em Ann. Inst. Henri Poincar{\'e}}, 19(2):325--383, 2018.
\newblock  \href{https://arxiv.org/abs/1611.04439}{{\ttfamily
  arXiv:1611.04439}}.

\bibitem{WeAreSchur}
C.~Cedzich, T.~Geib, F.~A. Gr{\"u}nbaum, C.~Stahl, L.~Vel{\'a}zquez, A.~H.
  Werner, and R.~F. Werner.
\newblock Quantum walks: {S}chur functions meet symmetry protected topological
  phases.
\newblock 2019.
\newblock  \href{https://arxiv.org/abs/1903.07494}{{\ttfamily
  arXiv:1903.07494}}.

\bibitem{UsOnTI}
C.~Cedzich, T.~Geib, C.~Stahl, L.~Vel\'azquez, A.~H. Werner, and R.~F. Werner.
\newblock Complete homotopy invariants for translation invariant symmetric
  quantum walks on a chain.
\newblock {\em Quantum}, 2:95, 2018.
\newblock  \href{https://arxiv.org/abs/1804.04520}{{\ttfamily
  arXiv:1804.04520}}.

\bibitem{UsOnRat_fields}
C.~Cedzich, T.~Geib, P.~Tieben, and R.~F. Werner.
\newblock Rational magnetic fields on the lattice: Regrouping invariance.
\newblock In preparation.

\bibitem{UsOnELM}
C.~Cedzich, T.~Geib, A.~H. Werner, and R.~F. Werner.
\newblock Quantum walks in external gauge fields.
\newblock {\em J. Math. Phys.}, 60(1):012107, 2019.
\newblock  \href{https://arxiv.org/abs/1808.10850}{{\ttfamily
  arXiv:1808.10850}}.

\bibitem{UsOnTop_short}
C.~Cedzich, F.~A. Grünbaum, C.~Stahl, A.~H. Werner, and R.~F. Werner.
\newblock Bulk-edge correspondence of one-dimensional quantum walks.
\newblock {\em J. Phys. A: Math. Theor.}, 49(21):21LT01, 2016.
\newblock  \href{https://arxiv.org/abs/1502.02592}{{\ttfamily
  arXiv:1502.02592}}.

\bibitem{cedzich2015quantum}
C.~Cedzich, F.~A. Gr{\"u}nbaum, L.~Vel{\'a}zquez, A.~H. Werner, and R.~F.
  Werner.
\newblock A quantum dynamical approach to matrix {K}hrushchev's formulas.
\newblock {\em Commun. Pure Appl. Math.}, 69(5):909--957, 2016.
\newblock  \href{https://arxiv.org/abs/1405.0985}{{\ttfamily arXiv:1405.0985}}.

\bibitem{ewalks}
C.~Cedzich, T.~Ryb\'ar, A.~H. Werner, A.~Alberti, M.~Genske, and R.~F. Werner.
\newblock Propagation of quantum walks in electric fields.
\newblock {\em Phys. Rev. Lett.}, 111:160601, Oct 2013.
\newblock  \href{https://arxiv.org/abs/1302.2081}{{\ttfamily arXiv:1302.2081}}.

\bibitem{anders_loc}
C.~Cedzich and A.~H. Werner.
\newblock Anderson localization for electric quantum walks and skew-shift {CMV}
  matrices.
\newblock 2019.
\newblock  \href{https://arxiv.org/abs/1906.11931}{{\ttfamily
  arXiv:1906.11931}}.

\bibitem{childs2009universal}
A.~M. Childs.
\newblock Universal computation by quantum walk.
\newblock {\em Phys. Rev. Lett.}, 102(18):180501, 2009.
\newblock  \href{https://arxiv.org/abs/0806.1972}{{\ttfamily arXiv:0806.1972}}.

\bibitem{childs2003exponential}
A.~M. Childs, R.~Cleve, E.~Deotto, E.~Farhi, S.~Gutmann, and D.~A. Spielman.
\newblock Exponential algorithmic speedup by a quantum walk.
\newblock In {\em Proc. STOC '03}, pages 59--68. ACM, 2003.
\newblock  \href{https://arxiv.org/abs/quant-ph/0209131}{{\ttfamily
  arXiv:quant-ph/0209131}}.

\bibitem{cyconschroeder}
H.~Cycon, R.~G. Froese, W.~Kirsch, and B.~Simon.
\newblock {\em Schr{\"o}dinger operators, with applications to quantum
  mechanics and global geometry}.
\newblock Theoretical and Mathematical Physics. Springer, Berlin, 1987.

\bibitem{DEFHV}
D.~Damanik, J.~Erickson, J.~Fillman, G.~Hinkle, and A.~Vu.
\newblock Quantum intermittency for sparse {CMV} matrices with an application
  to quantum walks on the half-line.
\newblock {\em J.\ Approx.\ Th.}, 208:59--84, 2016.
\newblock  \href{https://arxiv.org/abs/1507.02041}{{\ttfamily
  arXiv:1507.02041}}.

\bibitem{DFO}
D.~Damanik, J.~Fillman, and D.~Ong.
\newblock Spreading estimates for quantum walks on the integer lattice via
  power-law bounds on transfer matrices.
\newblock {\em J.\ Math.\ Pures.\ Appl.}, 105:293--341, 2016.
\newblock  \href{https://arxiv.org/abs/1505.07292}{{\ttfamily
  arXiv:1505.07292}}.

\bibitem{DFV}
D.~Damanik, J.~Fillman, and R.~Vance.
\newblock Dynamics of unitary operators.
\newblock {\em J.\ Fract.\ Geom.}, 1:391--425, 2014.
\newblock  \href{https://arxiv.org/abs/1308.1811}{{\ttfamily arXiv:1308.1811}}.

\bibitem{delyon1984remark}
F.~Delyon and B.~Souillard.
\newblock Remark on the continuity of the density of states of ergodic finite
  difference operators.
\newblock {\em Commun. Math. Phys.}, 94(2):289--291, 1984.

\bibitem{F2017CMP}
J.~Fillman.
\newblock Ballistic transport for limit-periodic jacobi matrices with
  applications to quantum many-body problems.
\newblock {\em Commun.\ Math.\ Phys.}, 350:1275--1297, 2017.
\newblock  \href{https://arxiv.org/abs/1603.01173}{{\ttfamily
  arXiv:1603.01173}}.

\bibitem{FO2017JFA}
J.~Fillman and D.~Ong.
\newblock Purely singular continuous spectrum for limit-periodic cmv operators
  with applications to quantum walks.
\newblock {\em J.\ Funct.\ Anal.}, 272:5107--5143, 2017.
\newblock  \href{https://arxiv.org/abs/1610.06159}{{\ttfamily
  arXiv:1610.06159}}.

\bibitem{FOZ17}
J.~Fillman, D.~C. Ong, and Z.~Zhang.
\newblock Spectral characteristics of the unitary critical almost-{M}athieu
  operator.
\newblock {\em Commun. Math. Phys.}, pages 1--37, 2016.
\newblock  \href{https://arxiv.org/abs/1512.07641}{{\ttfamily
  arXiv:1512.07641}}.

\bibitem{grillet2007abstract}
P.~A. Grillet.
\newblock {\em Abstract algebra}, volume 242.
\newblock Springer Science \& Business Media, 2007.

\bibitem{Grimmet}
G.~Grimmett, S.~Janson, and P.~F. Scudo.
\newblock Weak limits for quantum random walks.
\newblock {\em Phys. Rev. E}, 69:026119, 2004.
\newblock  \href{https://arxiv.org/abs/quant-ph/0309135}{{\ttfamily
  arXiv:quant-ph/0309135}}.

\bibitem{GVWW}
F.~A. Gr\"unbaum, L.~Vel\'azquez, A.~H. Werner, and R.~F. Werner.
\newblock Recurrence for discrete time unitary evolutions.
\newblock {\em Commun.\ Math.\ Phys.}, 320:543--569, 2013.
\newblock  \href{https://arxiv.org/abs/1202.3903}{{\ttfamily arXiv:1202.3903}}.

\bibitem{hof76}
D.~R. Hofstadter.
\newblock Energy levels and wave functions of {B}loch electrons in rational and
  irrational magnetic fields.
\newblock {\em Phys. Rev. B}, 14:2239--2249, Sep 1976.

\bibitem{J1999Ann}
S.~Jitomirskaya.
\newblock Metal-insulator transition for the almost {M}athieu operator.
\newblock {\em Ann. Math.}, 150:1159--1175, 1999.
\newblock  \href{https://arxiv.org/abs/math/9911265}{{\ttfamily
  arXiv:math/9911265}}.

\bibitem{Joye2011}
A.~Joye.
\newblock Random time-dependent quantum walks.
\newblock {\em Communications in Mathematical Physics}, 307(1):65, Jul 2011.
\newblock  \href{https://arxiv.org/abs/1010.4006}{{\ttfamily arXiv:1010.4006}}.

\bibitem{dynlocalain}
A.~Joye.
\newblock Dynamical localization for d-dimensional random quantum walks.
\newblock {\em Quantum Inf. Process.}, 11:1251--1269, 2012.
\newblock  \href{https://arxiv.org/abs/1201.4759}{{\ttfamily arXiv:1201.4759}}.

\bibitem{Joye_Merkli}
A.~Joye and M.~Merkli.
\newblock Dynamical localization of quantum walks in random environments.
\newblock {\em J. Stat. Phys.}, 140(6):1--29, 2010.
\newblock  \href{https://arxiv.org/abs/1004.4130}{{\ttfamily arXiv:1004.4130}}.

\bibitem{moleculestheothers}
P.~L. Krapivsky, J.~M. Luck, and K.~Mallick.
\newblock Interacting quantum walkers: two-body bosonic and fermionic bound
  states.
\newblock {\em J. Phys. A: Math. Theor.}, 48(47):475301, 2015.
\newblock  \href{https://arxiv.org/abs/1507.01363}{{\ttfamily
  arXiv:1507.01363}}.

\bibitem{Laughlin}
R.~B. Laughlin.
\newblock Quantized {H}all conductivity in two dimensions.
\newblock {\em Phys. Rev. B}, 23(10):5632, 1981.

\bibitem{Linden2009}
N.~Linden and J.~Sharam.
\newblock Inhomogeneous quantum walks.
\newblock {\em Phys. Rev. A}, 80(5):052327, 2009.
\newblock  \href{https://arxiv.org/abs/0906.3692}{{\ttfamily arXiv:0906.3692}}.

\bibitem{Lovett:2010ff}
N.~B. Lovett, S.~Cooper, M.~Everitt, M.~Trevers, and V.~Kendon.
\newblock Universal quantum computation using the discrete-time quantum walk.
\newblock {\em Phys. Rev. A}, 81:042330, 2010.
\newblock  \href{https://arxiv.org/abs/0910.1024}{{\ttfamily arXiv:0910.1024}}.

\bibitem{MJ2017ETDS}
C.~Marx and S.~Jitomirskaya.
\newblock Dynamics and spectral theory of quasi-periodic {S}chrödinger-type
  operators.
\newblock {\em Ergod. Theory Dyn. Syst.}, 37:2353--2393, 2017.
\newblock  \href{https://arxiv.org/abs/1503.05740}{{\ttfamily
  arXiv:1503.05740}}.

\bibitem{pastur1980}
L.~A. Pastur.
\newblock Spectral properties of disordered systems in the one-body
  approximation.
\newblock {\em Commun. Math. Phys.}, 75(2):179--196, 1980.

\bibitem{portugal2018quantum}
R.~Portugal.
\newblock {\em Quantum Walks and Search Algorithms}.
\newblock Springer, 2018.

\bibitem{Rieffel_Irrat_Rot}
M.~Rieffel.
\newblock ${C}^*$-algebras associated with irrational rotations.
\newblock {\em Pac. J. Math.}, 93(2):415--429, 1981.

\bibitem{Shikano:2010id}
Y.~Shikano and H.~Katsura.
\newblock Localization and fractality in inhomogeneous quantum walks with
  self-duality.
\newblock {\em Phys. Rev. E}, 82(3):031122, 2010.
\newblock  \href{https://arxiv.org/abs/1004.5394}{{\ttfamily arXiv:1004.5394}}.

\bibitem{Shikano:2011br}
Y.~Shikano and H.~Katsura.
\newblock Notes on inhomogeneous quantum walks.
\newblock {\em AIP Conf. Proc.}, 1363(1):151--154, 2011.
\newblock  \href{https://arxiv.org/abs/1104.2010}{{\ttfamily arXiv:1104.2010}}.

\bibitem{shubinDML}
M.~Shubin.
\newblock Discrete magnetic laplacian.
\newblock {\em Commun. Math. Phys.}, 164(2):259--275, 1994.

\bibitem{TKNN}
D.~J. Thouless, M.~Kohmoto, M.~P. Nightingale, and M.~Den~Nijs.
\newblock Quantized {H}all conductance in a two-dimensional periodic potential.
\newblock {\em Phys. Rev. Lett.}, 49(6):405, 1982.

\end{thebibliography}

\end{document}